\DeclareMathOperator*{\nom}{nom}
\DeclareMathOperator*{\rob}{rob}
\DeclareMathOperator*{\block}{block}
\DeclareMathOperator*{\mat}{mat}
\DeclareMathOperator*{\VEC}{vec}
\DeclareMathOperator*{\subto}{s.t.}
\DeclareMathOperator*{\SINr}{SINR_1}
\DeclareMathOperator*{\SINR}{SINR_2}
\DeclareMathOperator*{\WFq}{RWF_q}
\DeclareMathOperator*{\WFQ}{RWF}
\DeclareMathOperator*{\Diag}{Diag}
\DeclareMathOperator*{\PoA}{PoA}
\DeclareMathOperator*{\der}{\partial}
\DeclareMathOperator*{\D}{\frac{\partial}{\partial\epsilon}}
\DeclareMathOperator*{\E}{E}
\DeclareMathOperator*{\crit}{o}
\def\nn{\nonumber}
\newtheorem{thrm}{Theorem}
\newtheorem{lemma}{Lemma}
\newtheorem{corr}{Corollary}
\theoremstyle{remark}
\newtheorem{Remark}{Remark}
\newtheorem{prop}{\hspace{1cm}Proposition}
\title{Robust Rate-Maximization Game Under\\ Bounded Channel Uncertainty}
\author{Amod J.G. Anandkumar, ˜\IEEEmembership{Student~Member,~IEEE}, Animashree Anandkumar, ˜\IEEEmembership{Member,~IEEE},\\ Sangarapillai Lambotharan, ˜\IEEEmembership{Senior~Member,~IEEE}, and Jonathon A. Chambers, ˜\IEEEmembership{Fellow,~IEEE}. \thanks{
A.J.G. Anandkumar, S. Lambotharan and J.A. Chambers are with the Advanced Signal Processing Group, School of Electronic, Electrical and Systems Engineering, Loughborough University, Loughborough, LE11 3TU, U.K. (Email: \{A.J.G.Anandkumar, S.Lambotharan, J.A.Chambers\}@lboro.ac.uk). 

A. Anandkumar is with the Center for Pervasive Communications and Computing, Department of Electrical Engineering and Computer Science, Henry Samueli School of Engineering, University of California Irvine, Irvine, CA  92697-2625 USA (Email: a.anandkumar@uci.edu).

This work was supported by EPSRC grant EP/F065477/1. The second author was supported in part by ARO Grant W911NF-06-1-0076. Part of this work was presented at the 2010 IEEE International Conference on Acoustics, Speech, and Signal Processing \cite{conf:Amod_Icassp10} and the Fourty-Third Asilomar Conference on Signals, Systems and Computers \cite{conf:Amod_Asilomar10}.}
}
\begin{document}
\maketitle
\begin{abstract}
We consider the problem of decentralized power allocation for competitive rate-maximization in a frequency-selective Gaussian interference channel under bounded channel uncertainty. We formulate a distribution-free robust framework for the rate-maximization game. We present the robust-optimization equilibrium for this game and derive sufficient conditions for its existence and uniqueness.  We show that an iterative waterfilling algorithm converges to this equilibrium under certain sufficient conditions. We analyse the social properties of the equilibrium under varying channel uncertainty bounds for the two-user case. We also observe an interesting phenomenon that the equilibrium moves towards a frequency-division multiple access solution for any set of channel coefficients under increasing channel uncertainty bounds. We further prove that increasing channel uncertainty can lead to a more efficient equilibrium, and hence, a better sum rate in certain two-user communication systems. Finally, we confirm, through simulations, this improvement in equilibrium efficiency is also observed in systems with a higher number of users.
\end{abstract}
\begin{keywords}
Game theory, rate-maximization, Nash equilibrium, waterfilling, CSI uncertainty, robust games.
\end{keywords}
\section{Introduction}\label{sec:intro}
Rate-maximization is an important signal processing problem for power-constrained multi-user wireless systems. It involves solving a power control problem for mutually interfering users operating across multiple frequencies in a Gaussian interference channel. In modern wireless systems where users may enter or leave the system freely and make decisions independently, decentralized control approaches and distributed algorithms are necessary. Game-theoretic methods provide an appropriate set of tools for the design of such algorithms and have been increasingly used for the analysis and study of communications problems \cite{bk:mackenzie2006game}.

In multi-user systems, the users can either cooperate with each other to achieve a socially optimal solution or compete against one another to optimize their own selfish objectives. Cooperative game-theoretic approaches to the problem of power control in wireless networks have been investigated in \cite{jrnl:Zehavi_CoopGT}, \cite{jrnl:Berry_JSAC06} and surveyed in \cite{mag:GT_FF_GIC, mag:GT_FS_GIC, mag:Dist_RA_games}.  In this paper, we are interested in the other scenario where the users are competing against one another and aiming to maximize their own information rates. This competitive rate-maximization problem can be modelled as a strategic noncooperative game. The Nash equilibrium \cite{bk:osborne99a} of this game can be achieved via a distributed waterfilling algorithm where each user performs waterfilling by considering the multi-user interference as an additive coloured noise.  However, most of the current results on rate-maximization and waterfilling algorithms assume the availability of perfect information which is a strong requirement and cannot be met by practical systems.

This paper addresses the following fundamental questions: i) How can the users independently allocate power if the channel state information (CSI) they have is imperfect? How can we formulate a rate maximization game under channel uncertainty and what is the nature of the equilibrium of this game? ii) What are the existence and uniqueness properties of such an equilibrium? How can such a solution be computed by a distributed algorithm and what are the conditions for asymptotic convergence of such an algorithm? iii) How are these conditions affected by the channel uncertainty? iv) What is the effect of uncertainty on the sum-rate and the price of anarchy of such a system? In answering these questions, we can gain further insight into the behaviour of waterfilling algorithms and methods to improve sum-rate in general.
\subsection{Summary of Main Results}
The main contributions of this paper are three-fold. First, we provide a game-theoretic solution for the problem of competitive rate-maximization in the presence of channel uncertainty. Secondly, we analyse the efficiency of the equilibrium as a function of the channel uncertainty bound and prove that the robust waterfilling solution proposed achieves a higher sum-rate with increasing uncertainty under certain conditions. Finally, we verify these results via simulations. We show that improved sum-rate under channel uncertainty entails a cost in terms of more stringent conditions for the uniqueness of the equilibrium and slower convergence of the iterative algorithm to the equilibrium.

For systems with bounded channel uncertainty, we present a distribution-free robust formulation of the rate-maximization game based on an ellipsoid approximation of the uncertainty in the CSI. We present the robust-optimization equilibrium (RE) for this game. At the equilibrium, the users perform a modified waterfilling operation where frequency-overlap among users is penalized. We derive sufficient conditions for the existence and uniqueness of the equilibrium and for the convergence of an asynchronous iterative waterfilling algorithm to the equilibrium of this game.

In our work, we investigate the effect of channel uncertainty on the sum-rate of the system for the two-user case under two scenarios, viz., a two-frequency system and a system with large number of frequencies. For the two-frequency system, the equilibrium sum-rate improves and the price of anarchy decreases as the channel uncertainty increases under high interference. On the other hand, the behavior is reversed at low interference. Another important contribution of our work is to show that increasing channel uncertainty always drives the equilibrium closer to a frequency division multiple access (FDMA) solution for any set of channel coefficients for the system with asymptotic number of frequencies. This is because the users become more conservative about causing interference under increased uncertainty and this leads to  better partitioning of the frequencies among the users. Under certain channel conditions, this also translates to an improvement in the sum-rate and a decrease in the price of anarchy of the system. Thus, we show an interesting phenomenon where increased channel uncertainty can lead to a more efficient equilibrium, and hence, a better sum-rate in certain multi-user communication systems.
\subsection{Related Work}
An iterative waterfilling algorithm for maximizing information rates in digital subscriber line systems \cite{jrnl:cioffi} is an early application of a game-theoretic approach to designing a decentralized algorithm for multi-user dynamic power control. This framework has been further analyzed and extended in \cite{jrnl:luo2006analysis, jrnl:shum2007convergence, jrnl:Etkin2007, jrnl:AIWFA, jrnl:KTH}. The inefficiency of the Nash equilibrium (which need not be Pareto optimal) has been addressed in \cite{jrnl:Blum_TSP03} and \cite{jrnl:Opt1} and methods to improve the sum-rate of the system by using various pricing schemes and modified utility functions have been presented in \cite{jrnl:PriceIWFA, jrnl:KTH, conf:KTH_Gcom09}.  A centralized controller maximizing the sum-rate of the system leads to a non-convex optimization problem and has been shown to be strongly NP--hard in \cite{jrnl:ZQLuo_ComplexityDuality}. The Pareto-optimality of the FDMA solution for this sum-rate maximization problem under certain channel conditions has been proved in \cite{jrnl:ZQLuo}.

Uncertainty in game theory and distributed optimization problems has only recently been investigated. The issue of bounded uncertainty in specific distributed optimization problems in communication networks has been investigated in \cite{conf:Infocom08_DRO} wherein techniques to define the uncertainty set such that they can be solved distributively by robust optimization techniques are presented. In \cite{jrnl:robGameTh}, incomplete-information finite games have been modelled as a distribution-free \emph{robust game} where the players use a robust optimization approach to counter bounded payoff uncertainty. This robust game model also introduced a distribution-free equilibrium concept called the \emph{robust-optimization equilibrium} on which our approach is based.

A brief look at a robust optimization approach for the rate-maximization game with uncertainty in the noise-plus-interference estimate has been presented in \cite{jrnl:Haykin_RobIWFA}, where the authors present a numerically computed algorithm unlike the closed form results presented here. Such a numerical solution prevents further mathematical analysis of the equilibrium and its behaviour under different uncertainty bounds. Also, this uncertainty model is different from ours, where we assume the availability of CSI of the interfering channels and that these quantities have a bounded uncertainty.

A similar problem of rate-maximization in the presence of uncertainty in the estimate of noise-plus-interference levels due to quantization in the feedback channel has been considered in \cite{jrnl:RIWFA}. This problem has been solved using a probabilistically constrained optimization approach and as in our work, also results in the waterfilling solution moving closer to an FDMA solution, with corresponding improvement in sum-rate. However, the effect of quantization on the conditions for existence and uniqueness of the equilibrium and convergence of the algorithm have not been considered. The results presented in \cite{jrnl:RIWFA} are for a sequentially updated algorithm whereas our results allow asynchronous (and thus sequential or simultaneous) updates to the algorithm. Also, the power allocations computed by such a probabilistic optimization formulation do not guarantee that the information rates expected will be achieved for all channel realizations, unlike our worst-case optimization formulation. Furthermore, the relative error (and not just the absolute error due to quantization) in the interference estimate as defined in \cite{jrnl:RIWFA} is assumed to be bounded and drawn from a uniform distribution, which is inaccurate. 
In addition, this bound on the relative error can only be computed if the noise variance at the receivers is assumed to be known (which is not the case). The bounds computed in such a fashion are very loose and will degrade system performance. The other assumption that this relative error bound is in the range $[0,1)$ means that the absolute quantization error has to be less than the noise variance at the receivers, which restricts the applicability of the approach. Our problem formulation has no such limitation on the uncertainty bound based on the noise variances in the system.

Robust rate-maximization for a cognitive radio scenario with uncertainty in the channel to the primary user has been presented in \cite{conf:Palomar_RobCR}. This leads to a noncooperative game formulation without any uncertainties in the payoff functions of the game (unlike in our case) with robust interference limits acting as a constraint on the admissible set of strategies. This game is then solved by numerical optimization as there is no closed form solution.

Initial results of our work, involving the problem formulation of the robust rate-maximization game and the results for sufficient conditions for existence and uniqueness of the equilibrium, along with the conditions for asymptotic convergence of the waterfilling algorithm to the equilibrium, have been presented in \cite{conf:Amod_Icassp10}. Initial results on the analysis of the efficiency of the equilibrium was presented in \cite{conf:Amod_Asilomar10}. We now extend this work to include detailed proofs and further analysis of the efficiency of the equilibrium as a function of the channel uncertainty bound.
\subsection{Paper Outline}\label{subsec:outline}
This paper is organized as follows: Section~\ref{sec:sysModel} describes the system model and provides the necessary preliminaries. Section~\ref{sec:RobGameForm} formulates the robust game model for the rate-maximization game for the single-input single-output (SISO) frequency-selective Gaussian interference channel. Section~\ref{sec:RoE} presents the robust-optimization equilibrium for this game and conditions for its existence and uniqueness, along with the conditions for asymptotic convergence of the waterfilling algorithm to the equilibrium. Section~\ref{sec:Efficiency} presents the analysis on the effect of uncertainty on the sum-rate of the system for the two-user scenario. Section~\ref{sec:results} presents simulation results and Section~\ref{sec:concl} presents the conclusions from this work and possible future research directions.
\section{System Model and Preliminaries}\label{sec:sysModel}
\subsubsection*{Notations used} Vectors and matrices are denoted by lowercase and uppercase boldface letters respectively. The operators $(\cdot)^T, (\cdot)^{-T}, \E \{ \cdot \}$ and $\|\cdot\|_2 $ are  respectively transpose, transpose of matrix inverse, statistical expectation and Euclidean norm operators. The diagonal matrix with the arguments as diagonal elements is denoted by $\Diag(\cdot)$. The quantity $[\mathbf{A}]_{ij}$ refers to the $(i,j)$-th element of $\mathbf{A}$. $\mathbb{R}_+^{m \times n}$ is the set of $m \times n$ matrices with real non-negative elements. The spectral radius (largest absolute eigenvalue) of matrix $\mathbf{A}$ is denoted by $\rho(\mathbf{A})$. The operation $[x]_a^b$ is defined as $[x]_a^b = a \textrm{ if } x \leq a; \ x \textrm{ if } a < x < b; \ b \textrm{ if } x \geq b$ and $(x)^+ \triangleq \max(0,x)$. The matrix projection onto the convex set $\mathscr{Q}$ is denoted by $[\mathbf{x}]_{\mathscr{Q}} \triangleq \arg \min_{\mathbf{z} \in \mathscr{Q}} \|\mathbf{z}-\mathbf{x}\|_2$. The term $\mathbf{w}>\mathbf{0}$ indicates that all elements of $\mathbf{w}$ are positive and $\mathbf{X} \succ \mathbf{0}$ indicates that the matrix $\mathbf{X}$ is positive definite.

We consider a system similar to the one in \cite{jrnl:AIWFA}, which is a SISO frequency-selective Gaussian interference channel with $N$ frequencies, composed of $Q$ SISO links. $\Omega \triangleq \{1,\dots,Q \}$ is the set of the $Q$ players (i.e. SISO links). The quantity $H_{rq}(k)$ denotes the frequency response of the $k$-th frequency bin of the channel between source $r$ and destination $q$.\label{def:Hrq}
The variance of the zero-mean circularly symmetric complex Gaussian noise at receiver $q$ in the frequency bin $k$ is denoted by $\bar{\sigma}_q^2(k)$. The channel is assumed to be quasi-stationary for the duration of the transmission. Let $\sigma_q^2(k) \triangleq \bar{\sigma}_q^2(k)/|H_{qq}(k)|^2$ and the total transmit power of user $q$ be $P_q$. Let the vector $\mathbf{s}_q \triangleq [s_q(1)\ s_q(2) \dots s_q(N)]$ be the $N$ symbols transmitted by user $q$ on the $N$ frequency bins and $p_q(k) \triangleq \E \{ |s_q(k) |^2 \}$ be the power allocated to the $k$-th frequency bin by user $q$ and $\mathbf{p}_q \triangleq [p_q(1)\ p_q(2) \dots p_q(N)]$ be the power allocation vector. The power allocation of each user $q$ has two constraints:
\begin{itemize}\label{pow_constrain}
  \item Maximum total transmit power for each user
  \begin{equation}\label{eq:S_cons1}
    \E \big\{ \|\mathbf{s}_q \|_2^2 \big\} = \sum_{k=1}^{N} p_q(k) \leq P_q,
  \end{equation}
  for $q=1, \dots ,Q,$ where $P_q$ is power in units of energy per transmitted symbol.
  \item Spectral mask constraints
  \begin{equation}\label{eq:S_cons2}
    \E \big\{ |s_q(k) |^2 \big\} = p_q(k) \leq p_q^{max}(k),
  \end{equation}
  for $k=1, \dots, N$ and $q=1, \dots, Q$, where $p_q^{max}(k)$ is the maximum power that is allowed to be allocated by user $q$ for the frequency bin $k$.
\end{itemize}

Each receiver estimates the channel between itself and all the transmitters, which is private information. The power allocation vectors are public information, i.e. known to all users. Each receiver computes the optimal power allocation across the frequency bins for its own link and transmits it back to the corresponding transmitter in a low bit-rate error-free feedback channel. Note that this leads to sharing of more information compared to other works in literature such as \cite{jrnl:AIWFA}. The channel state information estimated by each receiver is assumed to have a bounded uncertainty of unknown distribution. Ellipsoid is often used to approximate unknown and potentially complicated uncertainty sets \cite{bk:cvxBoyd}. The ellipsoidal approximation has the advantage of parametrically modelling complicated data sets and thus provides a convenient input parameter to algorithms. Further, in certain cases, there are statistical reasons leading to ellipsoidal uncertainty sets and also results in optimization problems with convenient analytical structures \cite{jrnl:ellipsoid1, jrnl:ellipsoid2}.

We consider that at each frequency, the uncertainty in the channel state information of each user is deterministically modelled under an ellipsoidal approximation\footnote{More specifically, we have a spherical approximation in \eqref{eq:deltaF}.}
\begin{equation}\label{eq:deltaF}
\begin{split}
    \mathcal{F}_q = \bigg\{ F_{rq}(k) + \Delta F_{rq,k} \ : \ \sum_{r \neq q} | \Delta & F_{rq,k} |^2 \leq \epsilon_q^2 \\ &\forall k=1,\dots,N \bigg\},
\end{split}
\end{equation}
where $\epsilon_q \geq 0 \ \forall \ q \in \Omega$ is the uncertainty bound and 
\begin{equation}\label{eq:defF}
    F_{rq}(k) \triangleq \frac{|H_{rq}(k)|^2}{|H_{qq}(k)|^2} ,
\end{equation}
with $F_{rq}(k)$ being the nominal value. We can consider uncertainty in $F_{rq}(k)$ instead of $H_{rq}(k)$ because a bounded uncertainty in $F_{rq}(k)$ and $H_{rq}(k)$ are equivalent, but with different bounds.\footnote{The model considered here has some redundancy in the uncertainty for the case when $F_{rq}(k)=0$ which leads to including $F_{rq}(k) + \Delta F_{rq,k} < 0$ in the model which can never happen in practice. However, this does not affect the solution in our method due to the nature of the max-min problem formulation in \eqref{eq:WCp1} which leads to selection of positive values of $\Delta F_{rq,k}$.}

The information rate of user $q$ can be written as \cite{bk:ThomasCover}
\begin{equation}\label{eq:rate}
    R_q =  \sum_{k=1}^{N} \log \bigg( 1 + \frac{p_q(k)}{\sigma_q^2(k)+\sum_{r \neq q} F_{rq}(k) p_r(k)} \bigg),
\end{equation}
where $\sigma_q^2(k) \triangleq \bar{\sigma}_q^2(k)/|H_{qq}(k)|^2$.
The two popular measures of ``inefficiency'' of the equilibria of a game are the price of anarchy and the price of stability. The price of anarchy is defined as the ratio between the objective function value at the socially optimal solution and the \emph{worst} objective function value at any equilibrium of the game \cite{bk:nisan2007algorithmic}. The price of stability is defined as the ratio between the objective function value at the socially optimal solution and the \emph{best} objective function value at any equilibrium of the game \cite{bk:nisan2007algorithmic}. We consider the sum-rate of the system as the social objective function. The sum-rate of the system is given by
\begin{equation}\label{eq:sum_rate}
    S = \sum_{q=1}^{Q} R_q.
\end{equation}
In our case, the price of stability and anarchy are the same as we prove the sufficient conditions for the existence of a unique equilibrium in Theorem~\ref{thrm:B_existNE}. Thus, the price of anarchy, $\PoA$, is the ratio of the sum-rate of the system at the social optimal solution, $S^*$, and the sum-rate of the system at the robust-optimization equilibrium, $S^{\rob}$, i.e.,
\begin{equation}\label{eq:PoA}
    \PoA = \frac{S^{*}}{S^{\rob}\ }.
\end{equation}
Note that a lower price of anarchy indicates that the equilibrium is more efficient.
\section{Robust Rate-Maximization Game Formulation}\label{sec:RobGameForm}
\subsection{Nominal Game - No Channel Uncertainty}
The problem of power allocation across the frequency bins is cast as a strategic noncooperative game with the SISO links as players and their information rates as pay-off functions \cite{jrnl:AIWFA}.
Mathematically, the nominal game $\mathscr{G}^{\nom}$ can be written as, $\quad \forall q \in \Omega,$
\begin{equation}\label{eq:Nom_game}
    \left. \begin{aligned}
\max_{\mathbf{p}_q} \quad &  \sum_{k=1}^{N} \log\bigg( 1 + \frac{p_q(k)}{\sigma_q^2(k)+\displaystyle\sum_{r \neq q}^{} F_{rq}(k) p_r(k)} \bigg) \\
\subto \quad & \mathbf{p}_q \in \mathcal{P}_q,
\end{aligned} \right.
\end{equation}
where $\Omega \triangleq \{1, \dots, Q\}$ is the set of the $Q$ players (i.e.\ the SISO links) and $\mathcal{P}_q$ is the set of admissible strategies of user $q$, which is defined as
\begin{equation}\label{eq:S_admSet}
\begin{split}
    \mathcal{P}_q  \triangleq \bigg\{ \mathbf{p}_q \in \mathbb{R}^{N} &:\ 0 \leq p_q(k) \leq p_q^{max}(k), \\
    & \sum_{k=1}^{N} p_q(k) = P_q, \quad k=1, \dots, N \bigg\}.
\end{split}
\end{equation}
The inequality constraint in \eqref{eq:S_cons1} is replaced with the equality constraint in \eqref{eq:S_admSet} as, at the optimum of each problem in \eqref{eq:Nom_game}, the constraint must be satisfied with equality \cite{jrnl:AIWFA}. To avoid the trivial solution $p_q(k)=p_q^{max}(k) \ \forall k$, it is assumed that $\sum_{k=1}^N p_q^{max} > P_q$. Further, the players can be limited to pure strategies instead of mixed strategies, as shown in \cite{jrnl:Opt1}.
\subsection{Robust Game - With Channel Uncertainty}
According to the robust game model \cite{jrnl:robGameTh}, each player
formulates a best response as the solution of a robust (worst-case) optimization problem for the uncertainty in the
payoff function (information rate), given the other players' strategies. If all the players know that everyone else is using the
robust optimization approach to the payoff uncertainty, they would then be able to mutually predict each other's behaviour. The robust game $\mathscr{G}^{\rob}$ where each player $q$ formulates a worst-case robust optimization problem can be
written as, $\ \forall \ q \in \Omega$,
\begin{equation}\label{eq:WCp0}
  \left. \begin{aligned}
\max_{\mathbf{p}_q}& \ \min_{\tilde{F}_{rq} \in \mathcal{F}_q} \  \sum_{k=1}^{N} \log\bigg( 1 + \frac{p_q(k)}{\sigma_q^2(k)+\sum\limits_{r \neq q} \tilde{F}_{rq}(k) p_r(k)} \bigg) \\
\subto &\quad \mathbf{p}_q \in \mathcal{P}_q,
\end{aligned} \right.
\end{equation}
where $\mathcal{F}_q$ is the uncertainty set which is modelled under ellipsoid approximation as shown in \eqref{eq:deltaF}. This optimization problem using uncertainty sets can be equivalently written in a form represented by protection functions \cite{conf:Infocom08_DRO} as, $\ \forall q \in \Omega,$
\begin{align}\nn
\max_{\mathbf{p}_q}& \!  \min_{\Delta F_{rq,k}}  \sum_{k=1}^{N}
\log\bigg( 1 \!+ \!\frac{p_q(k)}{\sigma_q^2(k)\!+\!\!\sum\limits_{r \neq q}  (F_{rq}\!(k)\! +\! \Delta F_{rq,k}) p_r(k)} \bigg) \\ 
 \subto &\quad  \sum_{r \neq q} | \Delta F_{rq,k} |^2 \leq
\epsilon_q^2, \quad \mathbf{p}_q \in \mathcal{P}_q.\label{eq:WCp1}
\end{align}
From the Cauchy-–Schwarz inequality \cite{bk:MatrixAnalysis}, we get
\begin{equation}\label{eq:CS1}
\begin{split}
    \sum_{r \neq q}^{} \Delta F_{rq,k} p_r(k) &\leq \bigg[ \sum_{r \neq q}^{} |\Delta F_{rq,k}|^2 \sum_{r \neq q}^{} |p_r(k)|^2 \bigg]^\frac{1}{2}\\
    & \leq \epsilon_q \sqrt{\textstyle\sum_{r \neq q} p_r^2(k)}
\end{split}
\end{equation}
Using \eqref{eq:CS1}, we get the robust game $\mathscr{G}^{\rob}$ as, $\ \forall q \in \Omega,$
\begin{align}\nn
\max_{\mathbf{p}_q}&  \sum_{k=1}^{N} \log\bigg(1 \!+  \!
\frac{p_q(k)}{\sigma_q^2(k)+\!\!\sum\limits_{r \neq q} F_{rq}(k) p_r(k)\!  + \epsilon_q \sqrt{\sum\limits_{r \neq q} p_r^2(k)}} \bigg) \\
\subto & \quad  \mathbf{p}_q \in \mathcal{P}_q.\label{eq:WCp2}
\end{align}

Now that we have defined the problem for robust rate-maximization under bounded channel uncertainty, we present the solution to the optimization problem in \eqref{eq:WCp2} for a single-user in the following section.
\subsection{Robust Waterfilling Solution}\label{sec:RWF}
The closed-form solution to the robust optimization problem in \eqref{eq:WCp2} for any particular user $q$ is given by the following theorem:
\begin{lemma}\label{thrm:SU_WF}
Given the set of power allocations of other users $\mathbf{p}_{-q} \triangleq \{\mathbf{p}_{1}, \dots, \mathbf{p}_{q-1}, \mathbf{p}_{q+1}, \dots, \mathbf{p}_{Q}\}$, the solution to 
\begin{align}\nn
\max_{\mathbf{p}_q}&  \sum_{k=1}^{N} \log\bigg(1 \!+  \!
\frac{p_q(k)}{\sigma_q^2(k)+\!\!\sum\limits_{r \neq q} F_{rq}(k)
p_r(k)\!  + \epsilon_q \sqrt{\sum\limits_{r \neq q} p_r^2(k)}} \bigg) \\
\subto & \quad  \mathbf{p}_q \in \mathcal{P}_q.\label{eq:P1}
\end{align}
is given by the waterfilling solution
\begin{equation}\label{eq:P1soln}
    \mathbf{p}_q^\star=\WFq(\mathbf{p}_{-q}),
\end{equation}
where the waterfilling operator $\WFq(\cdot)$ is defined as $\big[\WFq (\mathbf{p}_{-q}) \big]_k \triangleq $
\begin{equation}\label{eq:WF}
\Bigg[ \mu_q - \sigma_q^2 (k)- \displaystyle\sum_{r \neq q} F_{rq}(k) p_r(k)
-\epsilon_q \sqrt{\sum_{r \neq q} p_r^2(k)}
\Bigg]_0^{p_q^{max}(k)}
\end{equation}
for $ k=1, \dots, N,$ where $\mu_q$ is chosen to satisfy the power constraint $\sum_{k=1}^N p_q^\star(k)=P_q$.
\end{lemma}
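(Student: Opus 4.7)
\textbf{Proof plan for Lemma~\ref{thrm:SU_WF}.} The key observation is that, once $\mathbf{p}_{-q}$ is fixed, the entire denominator
\[
  n_q(k) \;\triangleq\; \sigma_q^2(k)+\sum_{r\neq q}F_{rq}(k)\,p_r(k)+\epsilon_q\sqrt{\textstyle\sum_{r\neq q}p_r^2(k)}
\]
is a constant (in $\mathbf{p}_q$), strictly positive, and non-negative for all $k$. Hence the objective in \eqref{eq:P1} reduces to $\sum_{k=1}^{N}\log\bigl(1+p_q(k)/n_q(k)\bigr)$, which is a sum of strictly concave functions of $p_q(k)$. The feasible set $\mathcal{P}_q$ in \eqref{eq:S_admSet} is a nonempty, convex, compact polytope (nonempty by the assumption $\sum_k p_q^{\max}(k)>P_q$), so a unique maximizer exists and Slater's condition holds. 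The plan is therefore to write down the KKT conditions for this convex program and read off the waterfilling form in \eqref{eq:WF}.

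First, I would form the Lagrangian with multiplier $\mu_q\in\mathbb{R}$ for the equality constraint $\sum_k p_q(k)=P_q$ and nonnegative multipliers $\lambda_q(k),\nu_q(k)\geq0$ for the box constraints $0\leq p_q(k)\leq p_q^{\max}(k)$. Stationarity in $p_q(k)$ gives
\[
  \frac{1}{n_q(k)+p_q(k)}\;=\;\mu_q-\nu_q(k)+\lambda_q(k).
\]
Next, I would perform the standard complementary-slackness case analysis on each frequency bin: (i) if $0<p_q(k)<p_q^{\max}(k)$, both $\lambda_q(k)$ and $\nu_q(k)$ vanish and $p_q(k)=1/\mu_q-n_q(k)$; (ii) if $p_q(k)=0$, then $1/\mu_q\leq n_q(k)$; (iii) if $p_q(k)=p_q^{\max}(k)$, then $1/\mu_q-n_q(k)\geq p_q^{\max}(k)$. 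Combining the three cases yields the clipped expression
\[
  p_q^{\star}(k)\;=\;\bigl[\,1/\mu_q - n_q(k)\,\bigr]_{0}^{p_q^{\max}(k)},
\]
which, after the cosmetic renaming $\mu_q\leftarrow 1/\mu_q$ used in the lemma statement, is exactly \eqref{eq:WF}. Substituting the definition of $n_q(k)$ recovers the penalty term $\epsilon_q\sqrt{\sum_{r\neq q}p_r^2(k)}$ added to the classical interference-plus-noise level.

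Finally, I would argue that the water level $\mu_q$ is uniquely determined by the power budget. The mapping $\mu_q\mapsto\sum_{k=1}^{N}\bigl[\mu_q-n_q(k)\bigr]_0^{p_q^{\max}(k)}$ is continuous, non-decreasing, equal to $0$ for $\mu_q\leq\min_k n_q(k)$, and equal to $\sum_k p_q^{\max}(k)>P_q$ for $\mu_q$ sufficiently large; so by the intermediate value theorem there exists $\mu_q$ such that $\sum_k p_q^{\star}(k)=P_q$, as required. I do not anticipate a hard step here: the Cauchy--Schwarz reduction from \eqref{eq:WCp1} to \eqref{eq:WCp2} has already eliminated the inner minimization, so the lemma is essentially a cap-limited waterfilling on a redefined noise floor, and the only care needed is the case analysis in the KKT conditions to handle the spectral-mask upper bound.
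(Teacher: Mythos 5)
Your proposal is correct and follows exactly the route the paper takes: the paper's proof is a one-line appeal to the KKT conditions of the convex program \eqref{eq:P1}, and you have simply written out that derivation in full (constant denominator once $\mathbf{p}_{-q}$ is fixed, concavity, Slater, complementary-slackness case analysis, and the intermediate-value argument for the water level). No gaps; the reparametrization $\mu_q\leftarrow 1/\mu_q$ you note is indeed the only bookkeeping needed to match \eqref{eq:WF}.
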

\begin{proof}
This can be shown using the Karush-Kuhn-Tucker optimality conditions \cite{bk:cvxBoyd} of this problem.
\end{proof}

The robust waterfilling operation for each user is a distributed worst-case optimization under bounded channel uncertainty. Compared with the original waterfilling operation in \cite{jrnl:AIWFA} under perfect CSI (i.e. $\epsilon_q\equiv 0$), we see that an additional term has appeared in \eqref{eq:WF} for $\epsilon_q>0$.
This additional term can be interpreted as a penalty for allocating power to frequencies having a large product of uncertainty bound and norm of the powers of the other players currently transmitting in those frequencies. This is because the users assume the worst-case interference from other users and are thus conservative about allocating power to such channels where there is a strong presence of other users.

Having derived the robust waterfilling solution for a single user in the presence of channel uncertainty, we consider whether a stable equilibrium for the system exists and if so, what its properties are and how it can be computed in the multi-user scenario in the following section.
\section{Robust-Optimization Equilibrium}\label{sec:RoE}
The solution to the game $\mathscr{G}^{\rob}$ is the robust-optimization equilibrium (RE). At any robust-optimization equilibrium of this game, the optimum action profile of the players $\{ \mathbf{p}_q^\star \}_{q \in \Omega}$ must satisfy the following set of simultaneous waterfilling equations: $\forall q \in \Omega$,
\begin{equation}\label{eq:S_nash}
    \mathbf{p}_q^\star = \WFq( \mathbf{p}^\star_{1}, \dots, \mathbf{p}^\star_{q-1}, \mathbf{p}^\star_{q+1}, \dots, \mathbf{p}^\star_{Q}) = \WFq(\mathbf{p}^\star_{-q}).
\end{equation}

It can easily be verified that the RE reduces to the Nash equilibrium of the system \cite{jrnl:AIWFA} when there is no uncertainty in the system. In Section~\ref{sec:Efficiency}, we analyse the global efficiency of the RE and show that the RE has a higher efficiency than the Nash equilibrium due to a penalty for interference which encourages better partitioning of the frequency space among the users. The robust asynchronous iterative waterfilling algorithm for computing the RE of game $\mathscr{G}^{\rob}$ in a distributed fashion is described in Algorithm~\ref{al:RIWFA}.
\begin{algorithm}[t]
\caption{-- Robust Iterative Waterfilling Algorithm}\label{al:RIWFA}
  \begin{algorithmic}
    \STATE \textbf{Input:}\\
    $\Omega$: Set of users in the system\\
    $\mathcal{P}_q$: Set of admissible strategies of user $q$\\
    $\mathcal{T}_q$: Set of time instants $n$ when the power vector $\mathbf{p}_q^{(n)}$ of user $q$ is updated\\
    $T$: Number of iterations for which the algorithm is run\\
    $\tau_r^q(n)$: Time of the most recent power allocation of user $r$ available to user $q$ at time $n$\\
    $\WFq(\cdot)$: Robust waterfilling operation in \eqref{eq:WF}
    \STATE \textbf{Initialization:} $n=0$ and $\mathbf{p}_q^{(0)} \leftarrow$ any $\mathbf{p} \in \mathcal{P}_q, \ \forall q \in \Omega$
    \FOR{$n=0$ to $T$}
    \STATE $\mathbf{p}_q^{(n+1)} = \Bigg\{  \begin{aligned}
    &\WFq\Big(\mathbf{p}_{-q}^{( \tau^{q}(n) )}\Big), \  & \text{if } n \in \mathcal{T}_q, \\
    &\mathbf{p}_q^{(n)},                    & \text{otherwise,}
    \end{aligned}   \quad \forall q \in \Omega$.
    \ENDFOR
  \end{algorithmic}
\end{algorithm}
\subsection{Analysis of the RE of Game $\mathscr{G}^{\rob}$}
Let $\mathcal{N} = \{1, \dots, N\}$ be the set of frequency bins. Let $\mathcal{D}_q^\circ$ denote the set of frequency bins that user $q$ would never use as the best response to any set of strategies adopted by the other users,
\begin{equation}\label{eq:S_setD}
\begin{split}
    \mathcal{D}_q^\circ \triangleq \Big\{ k \in \{& 1, \dots, N\} \ :  \\ &\big[\WFq(\mathbf{p}_{-q}) \big]_k = 0 \quad \forall\mathbf{p}_{-q} \in \mathcal{P}_{-q} \Big\},
\end{split}
\end{equation}
where $\mathcal{P}_{-q} \triangleq \mathcal{P}_1 \times \dots \times \mathcal{P}_{q-1} \times \mathcal{P}_{q+1} \times \dots \times \mathcal{P}_Q$. The non-negative matrices $\mathbf{E}$ and $\mathbf{S}^{max} \in \mathbb{R}_+^{Q \times Q}$ are defined as
\begin{equation}\label{eq:E_defn}
    [\mathbf{E}]_{qr} \triangleq \left\{ \begin{aligned}
    &\epsilon_q, \quad &\text{if } r \neq q, \\
    &0,      &\text{otherwise}, \end{aligned} \right.
\end{equation}
and
\begin{equation}\label{eq:S_max}
    [\mathbf{S}^{max}]_{qr} \triangleq \left\{ \begin{aligned}
    &\max_{k \in \mathcal{D}_q \cap \mathcal{D}_r}  F_{rq}(k), \quad &\text{if } r \neq q, \\
    &0,      &\text{otherwise}, \end{aligned} \right.
\end{equation}
where $\mathcal{D}_q$ is any subset of $\{1, \dots, N\}$ such that $\mathcal{N-D}_q^\circ \subseteq \mathcal{D}_q \subseteq \{1, \dots, N\}$.

The sufficient condition for existence and uniqueness of the RE of game $\mathscr{G}^{\rob}$ and for the guaranteed convergence of Algorithm~\ref{al:RIWFA} is given by the following theorem:
\begin{thrm}\label{thrm:B_existNE}
Game $\mathscr{G}^{\rob}$ has at least one equilibrium for any set of channel values and transmit powers of the users. Furthermore, the equilibrium is unique and the asynchronous iterative waterfilling algorithm described in Algorithm~\ref{al:RIWFA} converges to the unique RE of game $\mathscr{G}^{\rob}$ as the number of iterations for which the algorithm is run, $T \rightarrow \infty$ for any set of feasible initial conditions if
\begin{equation}\label{eq:B_existNE}
\rho(\mathbf{S}^{max})<1-\rho(\mathbf{E}),
\end{equation}
where $\mathbf{E}$ and $\mathbf{S}$ are as defined in \eqref{eq:E_defn} and \eqref{eq:S_max} respectively.
\end{thrm}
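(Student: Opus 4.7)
The plan is to follow the projection-based contraction analysis used in \cite{jrnl:AIWFA} for the nominal iterative waterfilling, extended to the robust operator in \eqref{eq:WF}. For existence, I would first observe that each strategy set $\mathcal{P}_q$ is compact and convex and that by Lemma~\ref{thrm:SU_WF} the robust best response $\WFq(\mathbf{p}_{-q})$ is single-valued and continuous in $\mathbf{p}_{-q}$. The joint best-response map therefore sends the compact convex product set $\mathcal{P}_1\times\cdots\times\mathcal{P}_Q$ continuously into itself, so Brouwer's fixed-point theorem yields an equilibrium for any channel coefficients and power budgets.

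For the uniqueness and convergence parts I would express the waterfilling operator as a Euclidean projection, namely $\WFq(\mathbf{p}_{-q})=[\mathbf{v}_q(\mathbf{p}_{-q})]_{\mathcal{P}_q}$, where $[\mathbf{v}_q(\mathbf{p}_{-q})]_k=-\sigma_q^2(k)-\sum_{r\neq q}F_{rq}(k)p_r(k)-\epsilon_q\sqrt{\sum_{r\neq q}p_r^2(k)}$ and the water-level $\mu_q$ is absorbed as the Lagrange multiplier of the sum-power constraint. Non-expansiveness of projection onto a convex set, together with the triangle inequality applied to the nominal interference term and the reverse triangle inequality for the $\ell_2$ norm applied to the robust penalty (namely $\bigl|\,\|\mathbf{x}\|_2-\|\mathbf{y}\|_2\,\bigr|\le\|\mathbf{x}-\mathbf{y}\|_2$ with $\mathbf{x}=(p_r(k))_{r\neq q}$), then gives a component-wise vector inequality $e_q\le\sum_{r}\bigl([\mathbf{S}^{max}]_{qr}+[\mathbf{E}]_{qr}\bigr)e_r$ on the error vector $e_q\triangleq\|\mathbf{p}_q-\mathbf{p}_q'\|_2$ formed from any two input profiles. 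Restricting the maximum in $[\mathbf{S}^{max}]_{qr}$ to $\mathcal{D}_q\cap\mathcal{D}_r$ is admissible because on $\mathcal{D}_q^\circ$ the per-frequency power differences vanish for every player pair.

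The main technical hurdle, and the step I anticipate to be the hardest, is converting this component-wise bound into a strict block-contraction in a weighted max-norm $\|\mathbf{p}-\mathbf{p}'\|_{\infty,\mathbf{w}}\triangleq\max_q e_q/w_q$ under the stated condition $\rho(\mathbf{S}^{max})<1-\rho(\mathbf{E})$. Because $\mathbf{S}^{max}$ and $\mathbf{E}$ are non-negative but not in general symmetric and need not share a Perron eigenvector, the sub-additivity $\rho(\mathbf{S}^{max}+\mathbf{E})\le\rho(\mathbf{S}^{max})+\rho(\mathbf{E})$ is not automatic; a careful construction of a positive weight vector $\mathbf{w}$ that exploits the rank-one-plus-diagonal structure $\mathbf{E}=\boldsymbol{\epsilon}\mathbf{1}^T-\Diag(\boldsymbol{\epsilon})$, combined with Perron-Frobenius theory applied to $\mathbf{S}^{max}$ (preceded by a small positive perturbation if required to secure irreducibility), is needed. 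Once a contraction with modulus $\alpha<1$ is established, the Banach fixed-point theorem yields uniqueness of the RE and synchronous convergence of iterated best response, and the asynchronous convergence of Algorithm~\ref{al:RIWFA} then follows from the classical result on block-contractive totally asynchronous iterations (Bertsekas and Tsitsiklis, \emph{Parallel and Distributed Computation}, Prop.~6.2.1) under the standing assumptions on the update sets $\mathcal{T}_q$ and the delays $\tau_r^q(n)$.
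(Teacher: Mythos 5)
Your proposal is correct and, for the uniqueness and convergence claims, follows essentially the same route as the paper: the projection form of the robust waterfilling operator, non-expansiveness of the Euclidean projection onto $\mathcal{P}_q$, the reverse triangle inequality for the $\epsilon_q\sqrt{\sum_{r\neq q}p_r^2(k)}$ term, the componentwise error bound $\mathbf{e}_{\WFQ}\le(\mathbf{S}^{max}+\mathbf{E})\mathbf{e}$, a weighted block-maximum-norm contraction, and the standard theorem on totally asynchronous iterations of block-contractions (the paper packages the contraction step as Lemma~\ref{lemma:WFcontract} and then cites \cite[Theorem 2]{jrnl:JSAC} and \cite[Corollary 6.1]{bk:tsitsiklis1989parallel}). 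Two differences are worth recording. First, for existence the paper does not argue via Brouwer on the best-response map; it invokes Rosen's existence theorem for concave games \cite{Rosen1965}, which needs only compactness/convexity of each $\mathcal{P}_q$ and continuity/concavity of the payoff in $\mathbf{p}_q$, and therefore sidesteps the single-valuedness and continuity of $\WFq(\cdot)$ that your Brouwer argument must separately establish (both hold here by strict concavity and the maximum theorem, so your route works, it is just slightly longer). Second, you correctly isolate the real crux: passing from $\rho(\mathbf{S}^{max})<1-\rho(\mathbf{E})$ to a single weight vector $\mathbf{w}>\mathbf{0}$ with $\|\mathbf{S}^{max}+\mathbf{E}\|_{\infty,\mat}^{\mathbf{w}}<1$ is not automatic, since the spectral radius is not subadditive over nonnegative matrices and the two matrices need not share a maximizing weight. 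The paper dispatches this in one line by combining the triangle inequality for $\|\cdot\|_{\infty,\mat}^{\bar{\mathbf{w}}}$ with \cite[Corollary 6.1]{bk:tsitsiklis1989parallel}, and the implication displayed in \eqref{eq:NE_proof2} in fact runs from the norm bound to the spectral-radius bound, i.e., opposite to the direction the theorem requires; the clean statement that is proved by either argument is the sufficiency of $\rho(\mathbf{S}^{max}+\mathbf{E})<1$. So the weight construction you anticipate (exploiting the structure of $\mathbf{E}$, or simply restating the condition in terms of $\rho(\mathbf{S}^{max}+\mathbf{E})$) is genuinely needed to close the argument, and your proposal is the more candid of the two on this point.
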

\begin{proof}
See Appendix~\ref{appendix:B_existNE}.
\end{proof}

In the absence of uncertainty, i.e. when $\epsilon_q=0 \ \forall q \in \Omega$, we can see that this condition reduces to condition (C$1$) in \cite{jrnl:AIWFA} as expected. Since $\rho(\mathbf{E}) \geq 0$, the condition on $\mathbf{S}^{max}$ becomes more stringent as the uncertainty bound increases, i.e. the set of channel coefficients for which the existence of a unique equilibrium and the convergence of the algorithm is guaranteed shrinks as the uncertainty bound increases. Also, from Lemma~\ref{lemma:WFcontract}, we can see that the modulus of the waterfilling contraction increases as uncertainty increases. This indicates that the convergence of the iterative waterfilling algorithm becomes slower as the uncertainty increases, as shown in simulation results.
\begin{corr}\label{corr:Equale}
When the uncertainties of all the $Q$ users are equal (say $\epsilon$), the RE of the game $\mathscr{G}^{\rob}$ is unique and Algorithm~\ref{al:RIWFA} converges to the unique RE of game $\mathscr{G}^{\rob}$ as $T \rightarrow \infty$ for any set of feasible initial condition if
\begin{equation}\label{eq:NE_corr}
    \rho(\mathbf{S}^{max}) < 1- \epsilon(Q-1)
\end{equation}
\end{corr}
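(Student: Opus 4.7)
The plan is to specialize Theorem~\ref{thrm:B_existNE} to the case $\epsilon_q = \epsilon$ for all $q \in \Omega$, so that the entire task reduces to computing $\rho(\mathbf{E})$ explicitly under this symmetry assumption. Since the statement follows immediately once the spectral radius is known, no new fixed-point or contraction arguments need to be redone; the work is purely linear-algebraic.

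First, I would observe that under the equal-uncertainty hypothesis the matrix $\mathbf{E}$ defined in \eqref{eq:E_defn} has $\epsilon$ in every off-diagonal entry and $0$ on the diagonal, i.e.\
\begin{equation}
\mathbf{E} = \epsilon\,(\mathbf{J} - \mathbf{I}),
\end{equation}
where $\mathbf{J}$ is the $Q \times Q$ all-ones matrix and $\mathbf{I}$ is the identity. The eigenvalues of $\mathbf{J}$ are $Q$ (with eigenvector $\mathbf{1}$, the all-ones vector) and $0$ (with multiplicity $Q-1$, any vector orthogonal to $\mathbf{1}$). Consequently the eigenvalues of $\mathbf{J}-\mathbf{I}$ are $Q-1$ and $-1$ (multiplicity $Q-1$), so the eigenvalues of $\mathbf{E}$ are $\epsilon(Q-1)$ and $-\epsilon$ (multiplicity $Q-1$). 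Hence
\begin{equation}
\rho(\mathbf{E}) = \max\{\,\epsilon(Q-1),\,\epsilon\,\} = \epsilon(Q-1)
\end{equation}
for $Q \geq 2$ (and the claim is trivial for $Q=1$ since then there are no competing users).

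Substituting this expression for $\rho(\mathbf{E})$ into the sufficient condition \eqref{eq:B_existNE} of Theorem~\ref{thrm:B_existNE} gives exactly
\begin{equation}
\rho(\mathbf{S}^{max}) < 1 - \epsilon(Q-1),
\end{equation}
which is the condition claimed in \eqref{eq:NE_corr}. Existence and uniqueness of the RE, together with the asynchronous convergence of Algorithm~\ref{al:RIWFA} from any feasible initial point, then follow directly from Theorem~\ref{thrm:B_existNE}.

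I do not anticipate a genuine obstacle here; the only subtlety worth flagging is the routine verification that $\epsilon(Q-1) \geq \epsilon$ for $Q \geq 2$ so that the spectral radius is indeed attained on the eigenvalue associated with the $\mathbf{1}$-eigenspace of $\mathbf{J}$. Everything else is a direct application of the more general theorem.
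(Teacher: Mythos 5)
Your proposal is correct and follows the same route as the paper: the paper's proof simply asserts that $\rho(\mathbf{E})=\epsilon(Q-1)$ when all uncertainties equal $\epsilon$ and then invokes Theorem~\ref{thrm:B_existNE}, whereas you additionally supply the (routine but welcome) eigenvalue computation via $\mathbf{E}=\epsilon(\mathbf{J}-\mathbf{I})$ that justifies this spectral radius. No gaps.
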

\begin{proof}
When the uncertainties of all $Q$ users is $\epsilon$, we get $\rho(\mathbf{E})=\epsilon(Q-1)$.
\end{proof}

The above corollary explicitly shows how the uncertainty bound and the number of users in the system affect the existence of the equilibrium and the convergence to the equilibrium using an iterative waterfilling algorithm.  For a fixed uncertainty bound, as the number of users in the system increases, there is a larger amount of uncertain information in the system. Hence, the probability that a given system for a fixed uncertainty bound will converge will decrease as the number of users in the system increases. Also, if $\epsilon(Q-1) \geq 1$, we will not have a guaranteed unique equilibrium and algorithmic convergence for non-zero uncertainty bounds regardless of the channel coefficients. This will help designers plan for the appropriate uncertainty bounds based on the planned number of users in the system.
\section{Efficiency at the Equilibrium -- Two-User Case}\label{sec:Efficiency}
In this section, we analyse  the effect of uncertainty on the social output of the system.  For the two user case, the worst-case interference in each frequency reduces to $\big(F_{rq}(k)+\epsilon_q \big) p_r(k)$ with $q,r = 1,2$ and $q \neq r$. This means that the robust waterfilling operation for the two user case ($Q=2$) is simply the standard waterfilling solution with the worst-case channel coefficients. We restrict the analysis to the two-user case with identical noise variance $\sigma^2_q(k) = \sigma^2 \ \forall k,q$ across all frequencies and identical uncertainty bounds $\epsilon_1=\epsilon_2=\epsilon$ and total power constraints $\sum_{k=1}^N p_1(k) = \sum_{k=1}^N p_2(k) = P_T$ for both users. These results can be extended to the non-identical case along similar lines. In order to develop a clear understanding of the behaviour of the equilibrium, the sum-rate of the system is first analyzed for a system with two frequencies ($N=2$) and then extended to systems with large ($N \rightarrow \infty$) number of frequencies.
\subsection{Two Frequency Case ($N=2$)}
Consider a two-frequency anti-symmetric system as shown in Figure~\ref{fig:2sys_model} where the channel gains are $|H_{11}(1)|^2 =|H_{11}(2)|^2 =|H_{22}(1)|^2 =|H_{22}(2)|^2=1; |H_{12}(2)|^2=|H_{21}(1)|^2=\alpha$ and $|H_{12}(1)|^2=|H_{21}(2)|^2=m \alpha$ with $m>1$ and $0<\alpha<1$. From \eqref{eq:WF} the power allocations at the robust-optimization equilibrium of this system are,
\begin{equation}\label{eq:2pow_allocs1}
\begin{split}
p_1(1) &= [\mu_1 - \sigma^2 -(\alpha+\epsilon)p_2(1)]^+, \\
p_1(2) &= [\mu_1 - \sigma^2 -(m \alpha+\epsilon)p_2(2)]^+, \\
p_2(1) &= [\mu_2 - \sigma^2 -(m \alpha+\epsilon)p_1(1)]^+, \\
p_2(2) &= [\mu_2 - \sigma^2 -(\alpha+\epsilon)p_1(2)]^+,
\end{split}
\end{equation}
and the total power constraint for each user is $p_1(1)+p_1(2)=p_2(1)+p_2(2)=1$. Let $p_1(1)=p$, hence, by symmetry, $p_1(2)=p_2(1)=1-p$, $p_2(2)=p$ and $\mu_1=\mu_2=\mu$.
\begin{figure}[t]
 \centering
 \footnotesize
 \psfrag{Txa}[Bc]{Tx 1}\psfrag{Txb}[Bc]{Tx 2}\psfrag{Rxa}{Rx 1}\psfrag{Rxb}{Rx 2} \psfrag{CG1}{$[\alpha, m\alpha$]}\psfrag{CG2}{$[m\alpha, \alpha]$}\psfrag{g}[l]{[1,1]}\psfrag{h}[l]{[1,1]}
  \includegraphics[width=0.95\columnwidth]{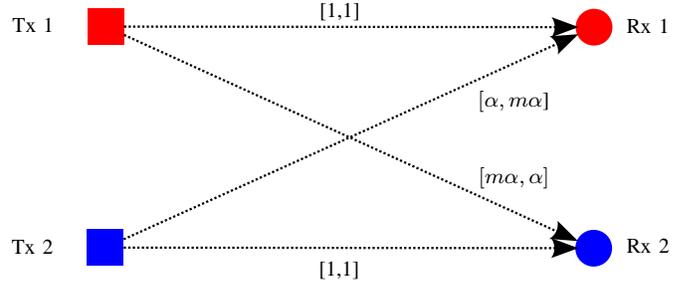}\\
  \caption{Anti-symmetric system with $Q=2$, $N=2, \epsilon_1=\epsilon_2=\epsilon$ and the noise variances for both users in both frequencies is $\sigma^2$. The channel gains are: $|H_{11}(1)|^2=|H_{11}(2)|^2=|H_{22}(1)|^2=|H_{22}(2)|^2=1; |H_{12}(2)|^2=|H_{21}(1)|^2=\alpha$ and $|H_{12}(1)|^2=|H_{21}(2)|^2=m \alpha$ with $m>1$ and $0 < \alpha<1$. The power allocations for this system at the robust-optimization equilibrium are presented in \eqref{eq:2pow_allocs1}.} \label{fig:2sys_model}
\end{figure}

The following theorem presents the effect of uncertainty on the sum-rate and price of anarchy of the system for the high interference and low interference cases.
\begin{thrm}\label{thrm:2F}
For the two-user two-frequency anti-symmetric system described above, we have the following results:
\begin{itemize}
  \item High interference: When $\sigma^2 \ll \alpha(1-p)$, the sum-rate increases and the price of anarchy decreases as the channel uncertainty bound increases.
  \item Low interference: When $\sigma^2 \gg m\alpha p$, the sum-rate decreases and the price of anarchy increases as the channel uncertainty bound increases.
\end{itemize}
\end{thrm}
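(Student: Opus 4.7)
\smallskip

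\noindent\textbf{Proof plan.} The plan is to exploit the full anti-symmetry of the system to reduce \eqref{eq:2pow_allocs1} to a single scalar equation in $p$, analyze how $p$ varies with $\epsilon$, and then track the sum-rate through the chain rule. First I would use the symmetry $p_1(1)=p_2(2)=p$, $p_1(2)=p_2(1)=1-p$ and $\mu_1=\mu_2=\mu$ to collapse the four waterfilling equations in \eqref{eq:2pow_allocs1} into two, then eliminate $\mu$ by subtraction. Assuming the interior (non-FDMA) regime where all powers are strictly positive, this yields the closed form
\begin{equation}\nn
p(\epsilon)=\frac{1-\alpha-\epsilon}{2-2\epsilon-\alpha(1+m)}.
\end{equation}
Differentiating gives $dp/d\epsilon=\alpha(m-1)/[2-2\epsilon-\alpha(1+m)]^{2}>0$, so the equilibrium is pushed monotonically toward the FDMA corner $p=1$ as $\epsilon$ grows. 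A short algebraic manipulation shows $p-1/2=\alpha(m-1)/[2(2-2\epsilon-\alpha(1+m))]>0$, i.e.\ $p>1/2$ throughout, which will be crucial below.

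Next I would write the \emph{true} (non-worst-case) sum-rate as $S(\epsilon)=2R_1(p(\epsilon))$ with
\begin{equation}\nn
R_1(p)=\log\!\Bigl(1+\tfrac{p}{\sigma^{2}+\alpha(1-p)}\Bigr)+\log\!\Bigl(1+\tfrac{1-p}{\sigma^{2}+m\alpha p}\Bigr),
\end{equation}
so that $\mathrm{sign}(dS/d\epsilon)=\mathrm{sign}(dR_1/dp)$ since $dp/d\epsilon>0$. For the high-interference regime $\sigma^{2}\ll\alpha(1-p)$, I note that $p>1/2$ and $m>1$ imply $m\alpha p>\alpha(1-p)$, so $\sigma^{2}$ is also negligible in the second denominator. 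Dropping $\sigma^{2}$ and combining fractions via the telescoping identity
\begin{equation}\nn
\tfrac{1-\alpha}{\alpha+(1-\alpha)p}+\tfrac{1}{1-p}=\tfrac{1}{(\alpha+(1-\alpha)p)(1-p)},
\end{equation}
and the analogous identity for the other pair, reduces $dR_1/dp$ to a single rational expression whose sign is that of $(m-1)p^{2}+2p-1$. Solving the quadratic shows this is positive exactly when $p>1/(1+\sqrt{m})$, a threshold that sits strictly below $1/2$ for every $m>1$; combined with $p>1/2$ this gives $dS/d\epsilon>0$, and hence $d\mathrm{PoA}/d\epsilon<0$ because $S^{*}$ is independent of $\epsilon$ and $\mathrm{PoA}=S^{*}/S^{\rob}$.

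For the low-interference regime $\sigma^{2}\gg m\alpha p$, both cross-coupling denominators reduce to $\sigma^{2}$, so $R_1(p)\approx\log(1+p/\sigma^{2})+\log(1+(1-p)/\sigma^{2})$, which is strictly concave and symmetric about $p=1/2$. Its derivative $1/(\sigma^{2}+p)-1/(\sigma^{2}+1-p)$ is negative for every $p>1/2$, giving $dS/d\epsilon<0$ and $d\mathrm{PoA}/d\epsilon>0$. The main obstacle I expect is not the derivative computations themselves but the bookkeeping at the boundary: the closed form for $p(\epsilon)$ is only valid while the interior solution holds (roughly $\epsilon<1-m\alpha$, above which one component saturates at the $p_q^{\max}$ cap and FDMA takes over), and the high-interference hypothesis $\sigma^{2}\ll\alpha(1-p)$ becomes fragile precisely as $p\to 1$; one must therefore restrict attention to $\epsilon$ for which the stated limiting inequalities are self-consistent with the equilibrium it induces, which is exactly the scope of the theorem's hypotheses.
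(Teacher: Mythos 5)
Your proposal is correct and follows essentially the same route as the paper's proof: both derive the closed form $p(\epsilon)$ with $\der p/\der\epsilon>0$ and $p\geq 1/2$, approximate the sum-rate by dropping $\sigma^2$ (high interference) or the cross terms (low interference), and reduce the sign of $\der S/\der\epsilon$ to a condition equivalent to $p>1/(1+\sqrt{m})$, concluding the price-of-anarchy claims from the $\epsilon$-invariance of $S^*$. The only cosmetic difference is that you differentiate $R_1$ directly in $p$ and factor out the quadratic $(m-1)p^2+2p-1$, whereas the paper substitutes $\xi=p/(\alpha(1-p))$ and arrives at the identical threshold via $1-(1-p)^2/(mp^2)>0$.
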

\begin{proof}
See Appendix~\ref{appendix:2F}.
\end{proof}

We can see from this result that the RE behaves in opposite ways when there is high interference and when there is low interference in the system. This suggests that there might be a certain level of interference where the sum-rate and price of anarchy do not change with change in uncertainty. This is given by the following proposition:
\begin{prop}\label{prop:crit}
At the level of interference $\alpha=\alpha_{\crit}$, where
      \begin{equation}\label{eq:alpha_crit}
         \alpha_{\crit} = \frac{\sigma^2}{2m}\left( \left( (m+1)^2 + \frac{4m}{\sigma^2} \right)^{\frac{1}{2}} - m - 1 \right),
      \end{equation}
the sum-rate and the price of anarchy are independent of the level of uncertainty or the power allocation used. Furthermore, at this value of interference, the price of anarchy is equal to unity.
\end{prop}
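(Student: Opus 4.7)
The plan is to reduce the problem to a single-variable calculation by exploiting the anti-symmetric structure already identified in the setup of Theorem~\ref{thrm:2F}. With $p_1(1) = p_2(2) = p(\epsilon)$ and $p_1(2) = p_2(1) = 1 - p(\epsilon)$ at the equilibrium, substituting into \eqref{eq:rate} yields the equilibrium sum-rate
\[
S^{\rob}(p) = 2\log\!\left(1 + \frac{p}{\sigma^2 + \alpha(1-p)}\right) + 2\log\!\left(1 + \frac{1-p}{\sigma^2 + m\alpha\,p}\right).
\]
As $\epsilon$ varies, $p(\epsilon)$ sweeps through a subinterval of $[0,1]$, so $S^{\rob}$ being independent of $\epsilon$ for all admissible $\epsilon$ is equivalent, by analyticity of the rational expression, to $S^{\rob}(p)$ being identically constant in $p$.

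First I would evaluate the endpoints: $S^{\rob}(0) = S^{\rob}(1) = 2\log(1 + 1/\sigma^2)$, since both boundary points correspond to orthogonal FDMA allocations and are therefore interference-free. Hence the only possible constant value of $S^{\rob}$ is $2\log(1 + 1/\sigma^2)$. Imposing $S^{\rob}(p) \equiv 2\log(1+1/\sigma^2)$, clearing logarithms, and expanding the resulting rational identity would reduce to
\[
p(1-p)\bigl[\sigma^2 - \sigma^2(m+1)\alpha - m\alpha^2\bigr] = 0,
\]
which, viewed as a polynomial identity in $p$, forces the quadratic $m\alpha^2 + \sigma^2(m+1)\alpha - \sigma^2 = 0$. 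Solving for its unique positive root recovers the stated expression \eqref{eq:alpha_crit} for $\alpha_{\crit}$, and the common equilibrium sum-rate at $\alpha = \alpha_{\crit}$ is exactly $2\log(1+1/\sigma^2)$, independent of $\epsilon$ or of the particular anti-symmetric allocation realized.

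For $\PoA = 1$ at $\alpha_{\crit}$ I must establish the further identity $S^{*} = S^{\rob} = 2\log(1+1/\sigma^2)$. The inequality $S^{*} \geq 2\log(1+1/\sigma^2)$ is immediate since the pure FDMA allocation $(p_1,p_2)=((1,0),(0,1))$ is feasible and interference-free. For the matching upper bound I would exploit the invariance $S(p, q) = S(1-q, 1-p)$ of the unconstrained sum-rate (which follows from simultaneously swapping users $1 \leftrightarrow 2$ and frequencies $1 \leftrightarrow 2$, a transformation that leaves the anti-symmetric channel-gain pattern invariant) to restrict any candidate maximizer to the fixed set $\{q = 1-p\}$ of this involution, on which the previous step has already shown $S \equiv 2\log(1+1/\sigma^2)$ at $\alpha = \alpha_{\crit}$.

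The main obstacle will be precisely this last reduction, because $S(p,q)$ is not globally concave over $[0,1]^2$ and the symmetry argument alone does not immediately exclude a strictly higher maximizer off the line $q = 1-p$. Closing the gap cleanly will require either an explicit KKT analysis of $\max_{(p,q)\in[0,1]^2} S(p,q)$ at $\alpha = \alpha_{\crit}$, or an appeal to the FDMA sum-rate optimality criterion of \cite{jrnl:ZQLuo}, which for the specific channel pattern considered here delivers the matching upper bound $S^{*} \leq 2\log(1+1/\sigma^2)$ and thus $\PoA = S^{*}/S^{\rob} = 1$.
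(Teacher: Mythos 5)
Your derivation of $\alpha_{\crit}$ is correct and arrives at the same quadratic $m\alpha^2+\sigma^2(m+1)\alpha-\sigma^2=0$ as the paper, but by a different route: the paper writes $\frac{\der S_{\rob}}{\der \epsilon} = \frac{\der S_{\rob}}{\der p}\,\frac{\der p}{\der \epsilon}$, notes $\frac{\der p}{\der \epsilon}>0$ from \eqref{eq:2dp_de}, sets $\frac{\der S_{\rob}}{\der p}=0$, and selects the unique positive root in $\alpha$ that is independent of $p$ (hence of $\epsilon$); you instead impose constancy of $S_{\rob}(p)$ directly and factor the resulting rational identity as $p(1-p)\bigl[\sigma^2-\sigma^2(m+1)\alpha-m\alpha^2\bigr]=0$. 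The two are equivalent, but your version buys something the paper never states explicitly: the actual value $2\log(1+1/\sigma^2)$ of the common sum-rate at $\alpha_{\crit}$, and a global constancy statement over all of $p\in[0,1]$ rather than a vanishing-derivative statement on the interior region where the waterfilling operator is linear.

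On the $\PoA=1$ claim, the obstacle you flag is genuine, but you should be aware the paper does not clear it either: its proof observes only that all allocations parameterized by $p$ along the anti-symmetric line yield the same sum-rate at $\alpha_{\crit}$ and concludes $\PoA=1$, implicitly assuming the social optimum $S^*$ is attained on that line. Your point that symmetry of a non-concave objective does not force symmetry of its maximizer is exactly the step left unaddressed. Of your two proposed repairs, be cautious with the appeal to \cite{jrnl:ZQLuo}: the FDMA-optimality sufficient condition there requires $F_{12}(k)F_{21}(k)>1/4$, which for this channel pattern reads $m\alpha_{\crit}^2>1/4$; but the quadratic gives $m\alpha_{\crit}^2=\sigma^2\bigl(1-(m+1)\alpha_{\crit}\bigr)<\sigma^2$, so the condition fails whenever $\sigma^2\le 1/4$, i.e.\ precisely in the low-noise regime of interest ($\alpha_{\crit}$ sits on the boundary between the high- and low-interference regimes of Theorem~\ref{thrm:2F}, where FDMA is not strictly Pareto-dominant). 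The direct KKT/boundary analysis of the two-variable sum-rate on $[0,1]^2$ is therefore the only one of your two routes that can actually close the gap; as written, your argument establishes exactly as much as the paper's own proof does.
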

\begin{proof}
See Appendix~\ref{crit_proof}
\end{proof}

We can see that even for such a simple system, the global behaviour of the robust-optimization equilibrium appears to be quite complex. This indicates that the global properties of the robust-optimization equilibrium for larger systems is quite strongly dependent on the level of interference in the system, which is seen in the following results with asymptotic number of frequencies. However, the underlying nature of the algorithm for the two-frequency case is that the RE moves towards the FDMA solution as the uncertainty bound increases (from \eqref{eq:2dp_de}).
\subsection{Large Number of Frequencies ($N \rightarrow \infty$)}
We consider $J(k)$, defined as
\begin{equation}\label{eq:defJ}
    J(k) \triangleq -p_1(k)p_2(k),
\end{equation}
as a measure of the extent of partitioning of the frequency $k$. It is minimum ($J(k)=-1$) when both the users allocate all their total power to the same frequency $k$ and is maximum ($J(k)=0$) when at most one user is occupying the frequency $k$. Note that $J(k) = 0 \ \forall k \in \{1,\dots,N\}$ when the users adopt an FDMA scheme.

The following lemma describes the effect of the uncertainty bound on the extent of partitioning of the system:
\begin{lemma}\label{lemma:degPart}
When the number of frequencies, $N \rightarrow \infty$, the extent of partitioning in every frequency is non-decreasing as the uncertainty bound of the system increases for any set of channel values, i.e.,
\begin{equation}\label{eq:lemma_degPart}
    \D J(k) \geq 0 \quad \forall k \in \{1,\dots,N\} \ \ \text{when} \ \ N \rightarrow \infty,
\end{equation}
with equality for frequencies where $J(k)=0$, where $J(k)$ is defined in \eqref{eq:defJ}
\end{lemma}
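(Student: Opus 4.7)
The plan is to exploit the fact that for $Q=2$ the Cauchy--Schwarz bound in \eqref{eq:CS1} is tight, so the robust waterfilling collapses to ordinary waterfilling against the effective cross-gains $a(k) \triangleq F_{12}(k)+\epsilon$ (seen by user~1) and $b(k) \triangleq F_{21}(k)+\epsilon$ (seen by user~2), and then to compute $\D J(k)$ by implicit differentiation of the equilibrium waterfilling equations.

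First I would partition the frequency index set into the overlap set $\mathcal{K}_O \triangleq \{k : p_1(k)>0 \text{ and } p_2(k)>0\}$ and its complement. For $k \notin \mathcal{K}_O$ one has $J(k)=0$; moreover the inactive user's waterfilling argument is strictly below the clipping level $0$, so both allocations are locally constant in $\epsilon$ and $\D J(k)=0$, which gives the equality case of \eqref{eq:lemma_degPart}. It then remains to prove $\D J(k) \geq 0$ on $\mathcal{K}_O$.

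On $\mathcal{K}_O$ the clippings are inactive and the equilibrium reduces to the affine pair $p_1(k)+b(k)p_2(k) = m_1$ and $a(k)p_1(k)+p_2(k) = m_2$, with $m_q \triangleq \mu_q - \sigma^2$. The essential use of the hypothesis $N\to\infty$ is to argue that the water-level derivatives $\D \mu_q$ are negligible on a per-frequency basis: the total-power constraint $\sum_k p_q(k)=P_T$ pins down $\mu_q$ only aggregately, and as $N$ grows the equilibrium support grows with each allocation of order $1/N$, so the required rebalancing of an $O(1)$ aggregate is spread over $O(N)$ indices and the per-frequency contribution of $\D\mu_q$ vanishes. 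Treating $\mu_q$ as locally constant, I differentiate both equilibrium equations in $\epsilon$ to obtain
\[
\D p_1 + b\, \D p_2 = -p_2, \qquad a\, \D p_1 + \D p_2 = -p_1,
\]
whose unique solution is $\D p_1 = (b p_1 - p_2)/(1-ab)$ and $\D p_2 = (a p_2 - p_1)/(1-ab)$. Substituting into $\D J(k) = -p_2\, \D p_1 - p_1\, \D p_2$ yields the compact formula
\[
\D J(k) = \frac{p_1^2(k) + p_2^2(k) - (a(k)+b(k))\,p_1(k) p_2(k)}{1 - a(k)b(k)}.
\]

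For the sign, the denominator is positive on $\mathcal{K}_O$ since positivity of both powers forces $m_1 > b m_2$ and $m_2 > a m_1$, hence $a(k)b(k) < 1$. I would attack the numerator by rewriting it as $(p_1-p_2)^2 + (2-a-b)\,p_1 p_2$ and combining the uniqueness condition of Theorem~\ref{thrm:B_existNE}, which in the $Q=2$ setting specialises to $\sqrt{F_{12}^{max} F_{21}^{max}}<1-\epsilon$, with the closed-form equilibrium ratio $p_1/p_2 = (m_1 - b m_2)/(m_2 - a m_1)$, to show that the quadratic $t^2 - (a+b)t + 1$, evaluated at $t=p_1/p_2$, is non-negative on the admissible slice of $(m_1,m_2)$-space cut out by $p_1,p_2>0$.

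The main obstacle is the large-$N$ justification for neglecting $\D\mu_q$: without it, the water levels pick up global coupling through the power constraint and the clean per-frequency sensitivity above no longer holds. A secondary difficulty is the numerator sign-check, since for generic $(a,b)$ the quadratic $t^2-(a+b)t+1$ is not non-negative for every $t>0$; it is the combination of the uniqueness condition and the equilibrium's implicit restriction on the feasible ratio $p_1/p_2$ that ultimately makes the inequality go through.
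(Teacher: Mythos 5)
Your route is essentially the paper's: split the frequencies into exclusive and overlapped sets, dispose of the $J(k)=0$ frequencies, differentiate the affine equilibrium equations on the overlap set, and invoke $N\rightarrow\infty$ to decouple the water levels. Indeed your closed form $\D J(k)=\bigl(p_1^2(k)+p_2^2(k)-(a(k)+b(k))p_1(k)p_2(k)\bigr)/(1-a(k)b(k))$ is exactly what the paper's limiting quadratic form $\mathbf{q}_k^T\mathbf{A}_k^{-1}\mathbf{q}_k$ with $\mathbf{q}_k=[p_2(k),\ p_1(k)]^T$ evaluates to. But the two steps you flag as obstacles are precisely where the proof lives, and you leave both open. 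For the water levels, the paper does not argue $\D\mu_q\approx 0$ heuristically: it adjoins $(\mu_1,\mu_2)$ to the unknowns, writes the entire fixed point as the block linear system \eqref{b1}, inverts it in closed form, and differentiates the inverse. The coupling through the water levels then appears as the explicit term $\mathbf{A}_k^{-1}\mathbf{Z}\mathbf{M}_k$, which vanishes because $\mathbf{Z}=O(1/N)$ (its diagonal carries $n_1+n_2=O(N)$) while $\mathbf{M}_k=O(n_{ol})$ --- and this requires the extra hypothesis $n_{ol}=o(N)$, which your sketch never isolates. Without some quantitative control of that kind, ``the rebalancing is spread over $O(N)$ indices'' is not a proof: the $O(1/N)$ per-frequency corrections are summed over the $O(n_{ol})$ overlapped bins, so the limit depends on how $n_{ol}$ scales with $N$.

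The second gap is the sign of the numerator, where you state a plan but no argument. The paper closes this step differently from what you propose: it does not use the equilibrium restriction on the ratio $p_1(k)/p_2(k)$ at all, but asserts that the uniqueness condition of Theorem~\ref{thrm:B_existNE} makes $\mathbf{A}_k^{-1}+\mathbf{A}_k^{-T}$ positive definite, i.e.\ $1-a(k)b(k)>0$ and $a(k)+b(k)<2$, so that $t^2-(a+b)t+1>0$ for \emph{every} $t$. Your observation that $\rho(\mathbf{S}^{max})<1-\rho(\mathbf{E})$ only controls the geometric mean $\sqrt{F_{12}F_{21}}$, and hence does not by itself force $a+b<2$, is a sharp and legitimate one --- but having made it, you owe an actual proof that the admissible equilibrium ratios $t=(m_1-bm_2)/(m_2-am_1)$ avoid the interval between the roots of $t^2-(a+b)t+1$ when $a+b\geq 2$, and no such argument is given (nor is it obvious that one exists). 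As it stands, the proposal reproduces the paper's computation up to the point where the real work begins and then defers both decisive steps.
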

\begin{proof}
See Appendix~\ref{appendix:degPart}.
\end{proof}

The above lemma suggests that the robust-optimization equilibrium moves towards greater frequency-space partitioning as the uncertainty bound increases when there is a large number of frequencies in the system. In other words, the RE is moving closer to an FDMA solution under increased channel uncertainty. When the FDMA solution is globally optimal, this will lead to an improvement in the performance of the equilibrium. This is stated in the following theorem:
\begin{thrm}\label{thrm:LargeN}
As the number of frequencies, $N \rightarrow \infty$, the sum-rate (price of anarchy) at the robust-optimization equilibrium of the system is non-decreasing (non-increasing) as the uncertainty bound increases if, $\forall \ k\in \{1,\dots,N\}$,
\begin{equation}\label{eq:sumRateTh}
    (F_{21}(k) - \epsilon) ( F_{12}(k) - \epsilon) > \frac{1}{4}. 
\end{equation}
\end{thrm}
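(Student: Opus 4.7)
The plan is to combine the partitioning result of Lemma~\ref{lemma:degPart} with a local analysis of how the per-frequency sum-rate responds to motion of the RE toward an FDMA allocation, and to identify the stated condition as the precise threshold at which such motion is rate-improving. I would begin by writing $S=\sum_{k=1}^{N}S_k$, where $S_k$ denotes the two-user Shannon rate at frequency $k$ computed with the nominal channels $F_{rq}(k)$; since these nominal channels are $\epsilon$-independent, all $\epsilon$-dependence is carried by the equilibrium powers $p_q(k)(\epsilon)$. Applying the chain rule gives
\begin{equation*}
\frac{dS}{d\epsilon}=\sum_{k=1}^{N}\!\left[\frac{\partial S_k}{\partial p_1(k)}\frac{dp_1(k)}{d\epsilon}+\frac{\partial S_k}{\partial p_2(k)}\frac{dp_2(k)}{d\epsilon}\right],
\end{equation*}
and the goal is to show that the right-hand side is non-negative.

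For the per-frequency analysis I would reparameterize every overlapping frequency by $u=p_1(k)+p_2(k)$ and $J(k)=-p_1(k)p_2(k)$, treating $J(k)$ as the ``partitioning coordinate'' already singled out in Lemma~\ref{lemma:degPart}. A direct computation of $S_k$ in the $(u,J)$ coordinates shows that, at fixed $u$, the sign of $\partial S_k/\partial J$ is controlled by a quadratic in $F_{12}(k),F_{21}(k),\sigma^2$; the positivity of this quadratic is intended to match exactly the inequality $(F_{21}(k)-\epsilon)(F_{12}(k)-\epsilon)>\tfrac14$, which is a robust analogue of the FDMA-optimality condition of \cite{jrnl:ZQLuo} shifted by $\epsilon$ to reflect the fact that the motion of the RE is not arbitrary but is dictated by the robust waterfilling \eqref{eq:WF} with effective interference coefficients $F_{rq}(k)+\epsilon$. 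On FDMA-like frequencies ($p_1(k)p_2(k)=0$), Lemma~\ref{lemma:degPart} yields $\partial J(k)/\partial\epsilon=0$, so these terms contribute nothing; on overlapping frequencies it yields $\partial J(k)/\partial\epsilon\geq 0$, so under the stated condition each surviving term has the desired sign and $dS/d\epsilon\geq 0$ follows. The price-of-anarchy claim is then immediate from \eqref{eq:PoA}, because $S^{*}$ does not depend on $\epsilon$, so a non-decreasing denominator produces a non-increasing ratio.

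The hard part will be the bookkeeping that couples the $(u,J)$ parameterization across frequencies: the total power constraints $\sum_k p_q(k)=P_T$ force the $u$-component of the perturbation in one frequency to be compensated in others, and in general these induced perturbations contribute first-order corrections of comparable magnitude to $dS/d\epsilon$. The argument will therefore need to invoke the asymptotic regime $N\to\infty$ in an essential way: per-frequency powers become vanishingly small, the equilibrium waterlevels $\mu_1,\mu_2$ equalise across active frequencies, and the residual $u$-shifts can be aggregated to cancel by summing the fixed-point conditions \eqref{eq:S_nash}. Making this aggregation rigorous, and verifying that after cancellation the surviving contribution really factors through $\partial J(k)/\partial\epsilon$ with coefficient whose sign is governed by $(F_{21}(k)-\epsilon)(F_{12}(k)-\epsilon)-\tfrac14$, is the principal technical hurdle.
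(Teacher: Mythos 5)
Your proposal is built on the same two pillars as the paper's proof --- Lemma~\ref{lemma:degPart} for the monotone motion of the equilibrium toward FDMA, and the per-frequency condition of \cite{jrnl:ZQLuo} for why that motion helps --- but it executes the second step quite differently. The paper never differentiates the sum-rate at all: it cites \cite[Corollary 3.1]{jrnl:ZQLuo} for quasi-convexity of the per-frequency sum-rate when $F_{21}(k)F_{12}(k)>1/4$ and \cite[Theorem 3.3]{jrnl:ZQLuo} to conclude that, when $F_{21}(k)F_{12}(k)>\frac{1}{4}\left(1+\frac{1}{C-1}\right)^2$ holds for all $k$ (automatic for large $N$, hence large $C$), the socially optimal point is exactly FDMA; it then argues that an equilibrium moving toward that optimum in the sense of Lemma~\ref{lemma:degPart} cannot decrease the sum-rate, and the price-of-anarchy claim follows because the FDMA optimum $S^*$ in \eqref{eq:PoA} is unaffected by $\epsilon$ --- the same observation you make. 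Your chain-rule computation in $(u,J)$ coordinates would, if completed, be a more self-contained argument, but the cross-frequency coupling through the total power constraints that you flag as ``the principal technical hurdle'' is exactly the bookkeeping the paper's global, Pareto-optimality-based argument is designed to avoid, and you do not resolve it; as written, your argument is a plan rather than a proof of the key monotonicity step.

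Two specific issues. First, your rationale for the $\epsilon$ shift in \eqref{eq:sumRateTh} is wrong: it has nothing to do with the effective interference coefficients $F_{rq}(k)+\epsilon$ appearing in the robust waterfilling \eqref{eq:WF}. The FDMA-optimality condition of \cite{jrnl:ZQLuo} must hold for the \emph{true} channel, which can lie anywhere in the uncertainty set \eqref{eq:deltaF}; the worst case for the product condition is $F_{rq}(k)-\epsilon$, and that is how \eqref{eq:sumRateTh} arises. (Note also that your $F+\epsilon$ heuristic would weaken the condition rather than strengthen it.) Second, your claim that $\partial S_k/\partial J$ at fixed $u$ is governed by a quadratic whose positivity ``matches exactly'' \eqref{eq:sumRateTh} is asserted but never checked; the $1/4$ threshold originates in a quasi-convexity criterion, and verifying that your coordinate change reproduces it is itself a nontrivial piece of the argument that you would need to supply.
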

\begin{proof}
Using \cite[Corollary 3.1]{jrnl:ZQLuo}, we find that the sum of the rates of the two users in the frequency $k$ is quasi-convex only if $F_{21}(k) F_{12}(k) > 1/4$. Let $C$ be the minimum number of frequencies occupied by any user. When there are only two users and a large number of frequencies, $C \gg 1$. If the condition $F_{21}(k)  F_{12}(k) > \frac{1}{4}(1 + \frac{1}{C-1})^2$ is satisfied for some $C \geq 2$ for all frequencies $k \in \{1,\dots,N\}$ (thus satisfying $F_{21}(k) F_{12}(k) > 1/4$), then the Pareto optimal solution is FDMA \cite[Theorem 3.3]{jrnl:ZQLuo}. This needs to be satisfied for the worst-case channel coefficients which leads to \eqref{eq:sumRateTh}. Thus, the solution moving closer to FDMA will improve the sum-rate of the system. From Lemma~\ref{lemma:degPart}, the robust equilibrium moves closer to FDMA as uncertainty increases and thus will result in an improvement in sum-rate.

The Pareto optimal solution under this condition (which is FDMA) is constant under varying uncertainty bounds as such an uncertainty in the interference coefficients $F_{12}(k)$ and $F_{21}(k)$ does not affect the FDMA solution where there is no interference. Thus, an increase in sum-rate will result in an decrease in price of anarchy.
\end{proof}
\begin{Remark}
For the special case of frequency-flat systems, at the equilibrium, all users have equal power allocation to all frequencies and this is not dependent on the uncertainty in the CSI. This leads to no change in the extent of partitioning and thus sum-rate and price of anarchy are not affected by uncertainty.
\end{Remark}
\begin{Remark}
The results of this section are not just limited to the robust-optimization equilibrium for the system presented here. When the uncertainty $\epsilon=0$, the framework presented here can be used to analyse the behaviour of the Nash equilibrium of the iterative waterfilling algorithm as a function of the interference coefficients. 
\end{Remark}
\begin{Remark}
The modified waterfilling operation in \eqref{eq:WF} can also be used as a pricing mechanism to achieve improved sum-rate performance in a system with no uncertainty where $\epsilon$ is a design parameter, with all the analytical results presented here still being valid.
\end{Remark}
%
%
\section{Simulation Results}\label{sec:results}
\begin{figure}\centering
\subfloat[a][Sum-rate of the system under high interference vs. uncertainty and interference.]
{\label{fig:2sum_rate_high}\begin{minipage}{\columnwidth}
\centering
\includegraphics[width=\columnwidth]{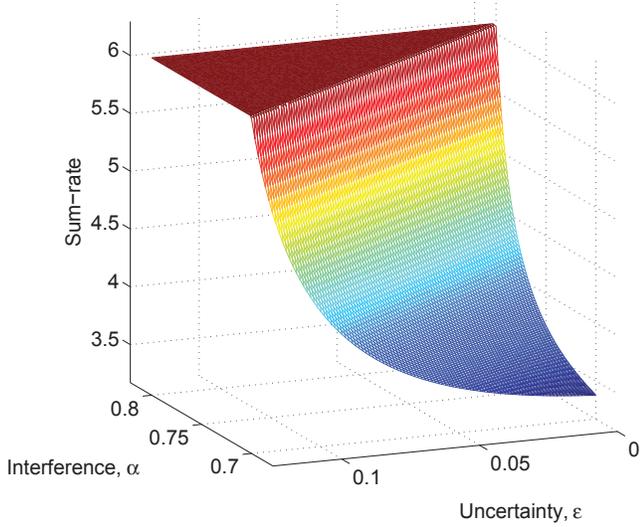}
\end{minipage}} \\
 \subfloat[b][Sum-rate of the system under low interference vs. uncertainty and interference.]
{\label{fig:2sum-rate_low}\begin{minipage}{\columnwidth}
\centering
\includegraphics[width=\columnwidth]{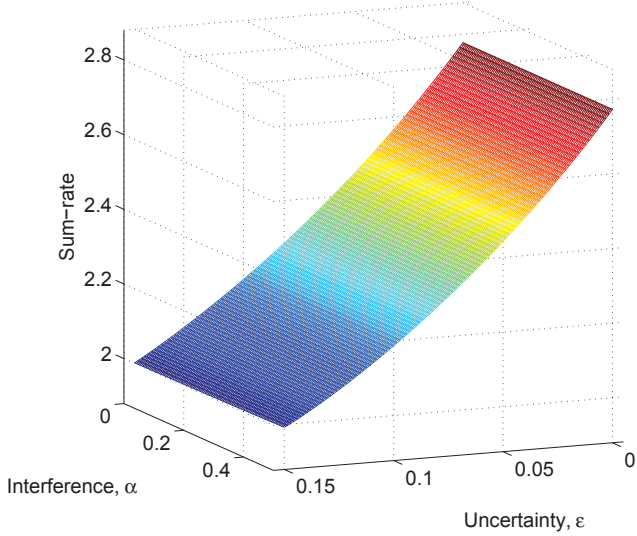}
\end{minipage} }
  \caption{Simulation results for the anti-symmetric system in Figure~\ref{fig:2sys_model}. Note that the zero uncertainty corresponds to the Nash equilibrium}\label{fig:2sims}
\end{figure}
\begin{figure}
\centering
\includegraphics[width=0.8\columnwidth]{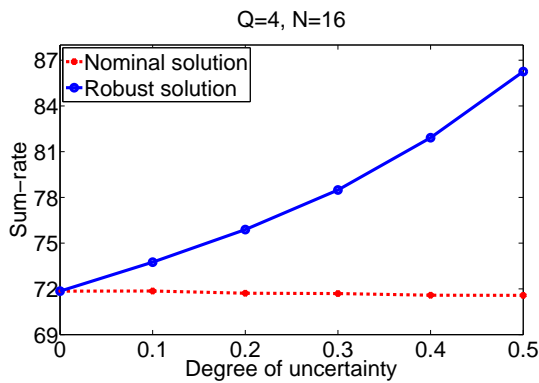}\\
\caption{Sum-rate of the system vs. uncertainty $\delta$.}
\label{fig:Delta_sum_rateSISO}
\end{figure}
\begin{figure}
\centering
\includegraphics[width=0.8\columnwidth]{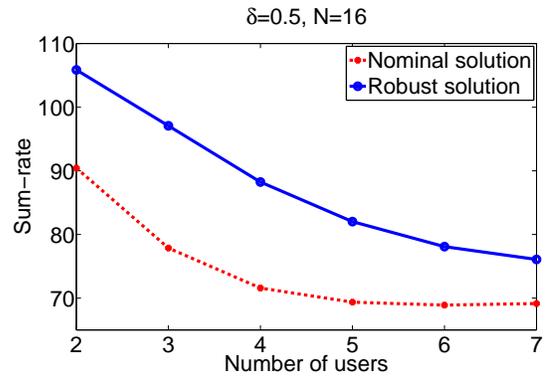}\\
\caption{Sum-rate of the system vs. number of users, $Q$.}\label{fig:Q_sum_rateSISO}
\end{figure}
\begin{figure}
\centering
\includegraphics[width=0.8\columnwidth]{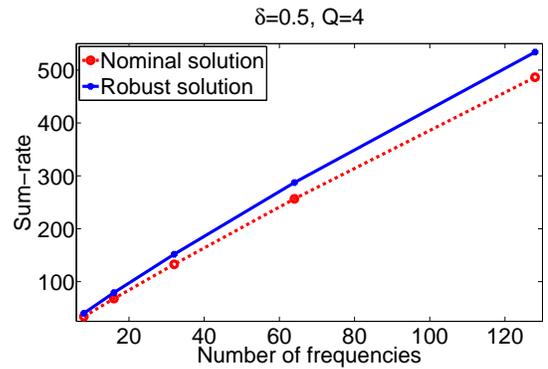}\\
\caption{Sum-rate of the system vs. number of frequencies, $N$.}\label{fig:Freq_sum_rateSISO}
\end{figure}
\begin{figure}
\centering
\includegraphics[width=0.8\columnwidth]{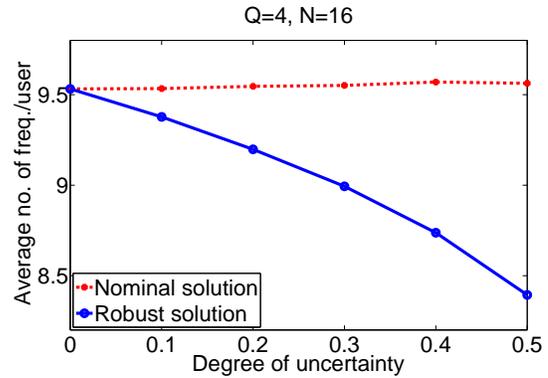}\\
\caption{Average number of channels occupied per user vs. uncertainty, $\delta$.}\label{fig:Delta_ch_occSISO}
\end{figure}
\begin{figure}
\centering
\includegraphics[width=0.8\columnwidth]{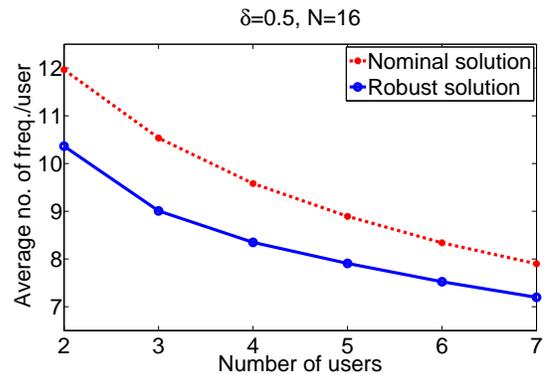}\\
\caption{Average number of channels occupied per user vs. number of users, $Q$.}\label{fig:Q_ch_occSISO}
\end{figure}
\begin{figure}[th]
\centering
\includegraphics[width=0.8\columnwidth]{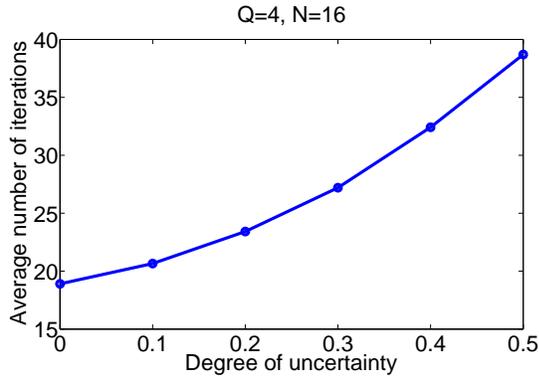}\\
\caption{Average number of iterations vs. uncertainty, $\delta$.}\label{fig:Delta_iterSISO}
\end{figure}
\begin{figure}[th]
\centering
\includegraphics[width=0.8\columnwidth]{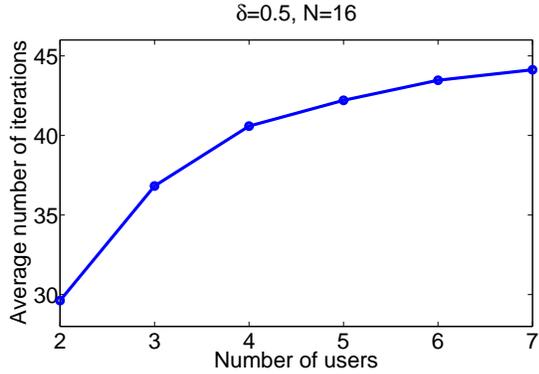}\\
\caption{Average number of iterations vs. number of users, $Q$.}\label{fig:Q_iter_SISO}
\end{figure}
\begin{figure}[th]
\centering
\includegraphics[width=0.8\columnwidth]{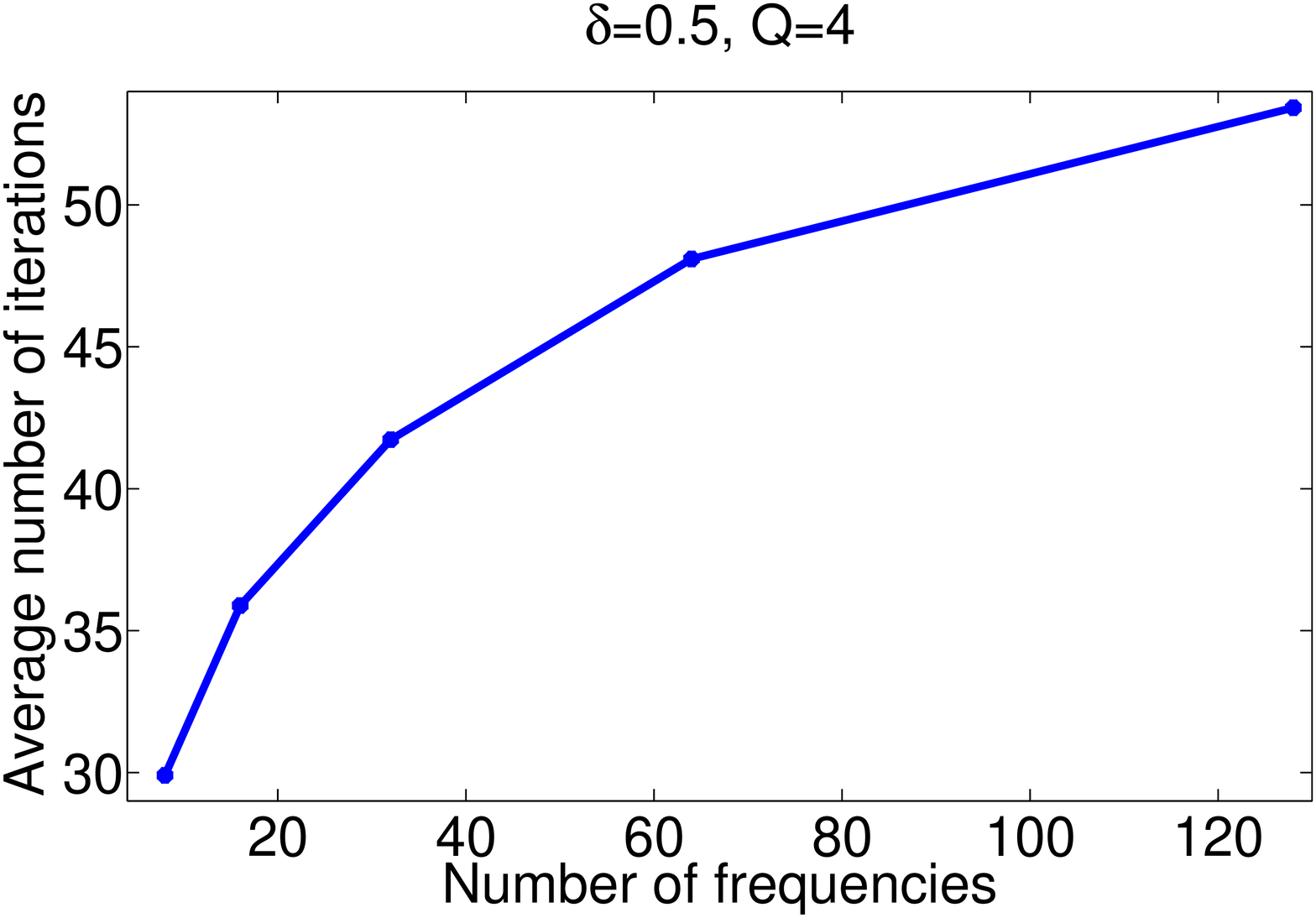}\\
\caption{Average number of iterations vs. number of frequencies, $N$.}\label{fig:Freq_iterSISO}
\end{figure}
In this section, we present some simulation results to study the impact of channel uncertainty on the RE by comparing it with the ideal scenario of NE under perfect CSI. Figure~\ref{fig:2sims} shows the simulation results for the two user and two frequency scenario and Figures~\ref{fig:Delta_sum_rateSISO}--\ref{fig:Freq_iterSISO} show the results for a more general system under different settings.

In Figure~\ref{fig:2sum_rate_high}, we can see the sum-rate at high interference as a function of interference and uncertainty in the system shown in Figure~\ref{fig:2sys_model}. The flat region corresponds to the sum-rate at Pareto optimal solution (FDMA) and the edge of the surface corresponds to the sufficient condition in \eqref{eq:B_existNE}. It can be seen that the Nash equilibrium (uncertainty, $\epsilon=0$) moves closer to the Pareto optimal solution as the interference increases. It is also evident that the sum-rate increases for a fixed interference as uncertainty increases, as expected from Theorem~\ref{thrm:2F}. In Figure~\ref{fig:2sum-rate_low}, we can see the sum-rate at low interference as a function of interference and uncertainty. As expected from Theorem~\ref{thrm:2F}, the sum-rate decreases as the uncertainty increases.

In Figures~\ref{fig:Delta_sum_rateSISO}--\ref{fig:Freq_iterSISO}, the behaviour of the equilibrium under varying uncertainty bounds is investigated through numerical simulations. The simulations are computed for a system with $Q$ users and $N$ frequencies averaged over 5000 channel realizations. The channel gains are $H_{rq}(k) \sim N_C(0,1)$ for $r\neq q$ and $H_{qq}(k)\sim N_C(0,2.25)$. The channel uncertainty model has nominal value $F_{rq}(k)= F^{\mbox{\tiny ~true}}_{rq}(k)(1 + e_{rq}(k))$ with $e_{rq}(k) \sim U(-\frac{\delta}{2}, \frac{\delta}{2}), \ \delta<1$. The specific parameter values used for the simulations are provided above each figure.
Note that the zero uncertainty solution corresponds to the Nash equilibrium and the nominal solution is the solution resulting from using the erroneous channel values in the traditional rate-maximization game $\mathscr{G}^S$ without accounting for its uncertainty. The effect of uncertainty, number of users and number of frequencies on the average sum-rate of the system, the average number of frequencies occupied by each user and the average number of iterations for convergence are examined. In these figures, the Nash equilibrium point is when the uncertainty is zero.

In Figure~\ref{fig:Delta_sum_rateSISO}, it can be observed that the sum-rate at the Nash equilibrium under perfect CSI is less than the sum-rate at the robust-optimization equilibrium under imperfect CSI and that the gap in performance increases as the uncertainty $\delta$ increases. Under imperfect CSI, the power allocation using the robust-optimization equilibrium\footnote{\eqref{eq:WF} and \eqref{eq:S_nash} are in terms of absolute uncertainty $\epsilon$ while the simulations use relative uncertainty $\delta$. They are equivalent to one another.} in \eqref{eq:WF} and \eqref{eq:S_nash} has higher sum-rate as uncertainty increases, because the users are more cautious about using frequencies with significant interference, thus reducing the total amount of interference in the system. The Pareto optimal solution in this scenario is some form of FDMA (where the specific channels allocated to a particular user will depend on the actual channel gains), and does not depend on the channel uncertainty bound for a given system realization. Thus, when the sum-rate of the system increases with rise in uncertainty, we can expect that the price of anarchy will decrease.

In Figure~\ref{fig:Q_sum_rateSISO}, it can be observed that the sum-rate of the system under the robust solution reduces when the number of users increases. This is because having a greater number of users results in higher interference for all users and this effect is strong enough to counter user diversity which would have resulted in higher sum-rates if the users were on an FDMA scheme. In Figure~\ref{fig:Freq_sum_rateSISO}, it can be observed that the sum-rate of the system improves with increase in number of frequencies and also that the robust solution continues to perform better than the nominal solution even when the number of frequencies increases.

In Figure~\ref{fig:Delta_ch_occSISO}, it can be seen that the robust solution results in a lower average number of channels per user as the uncertainty, $\delta$ increases. Also, the total number of channels each user occupies at the robust-optimization equilibrium  is lesser than at the nominal solution, regardless of the number of users, as can be seen in Figure~\ref{fig:Q_ch_occSISO}.
This implies that the users are employing a smaller number of frequencies, which demonstrates the improved partitioning of the frequency space among the users in order to reduce interference. Hence, this leads to the higher sum-rates as observed in Figure~\ref{fig:Delta_sum_rateSISO}.

In Figures~\ref{fig:Delta_iterSISO} and~\ref{fig:Q_iter_SISO}, it can be observed that the average number of iterations for convergence increases as the uncertainty $\delta$, and the number of users $Q$ increase respectively. This is as expected from Lemma~\ref{lemma:WFcontract} and Corollary~\ref{corr:Equale}, as the modulus of the block-contraction in \eqref{eq:contractWF} increases as the uncertainty increases. This indicates that the step size of each iteration reduces as uncertainty increases, leading to slower convergence. In Figure~\ref{fig:Freq_iterSISO}, it can be observed that the average number of iterations increases with the number of frequencies. This is due to the fact that when there are more frequencies, and as such there is a greater probability of a channel realization being drawn from the tail of the Gaussian distribution used to generate them, which results in a smaller modulus of the block-contraction in \eqref{eq:contractWF} on average, a greater number of iterations are required in order to converge to the equilibrium. Thus, the trade-off for robust solutions with higher sum-rates is in a higher number of iterations before convergence.
\section{Conclusions}\label{sec:concl}
In this paper, we have presented a novel approach for rate-maximization games under bounded channel state information uncertainty. We have introduced a distribution-free robust formulation for the rate-maximization game. The solution to this game has been shown to be a modified waterfilling operation. The robust-optimization equilibrium (RE) for this game has been presented and sufficient conditions for its existence, uniqueness and asymptotic convergence of the algorithm to the RE have been derived. For the two-user case, the effect of uncertainty on the social output of the system has been analyzed. We analytically prove that the RE moves towards an FDMA solution as the uncertainty bound increases, when the number of frequencies in the system becomes asymptotically large. Thus, an interesting effect of improvement in sum-rate as the uncertainty bound increases is observed. In summary, for systems with significant interference, bounded channel uncertainty leads to an improved sum-rate but at the cost of greater number of iterations. This framework can be extended to MIMO rate-maximization games, cognitive radio with various interference constraints and other noncooperative games.
\appendix           
\subsection{Robust Waterfilling as a Projection Operation}
Let $\Phi_q(k)$ represent the denominator terms in \eqref{eq:P1}, which is the worst-case noise+interference
\begin{equation}\label{eq:Phi}
    \Phi_q(k) \triangleq \sigma_q^2(k)+\displaystyle\sum_{r \neq q} F_{rq}(k) p_r(k) + \epsilon_q \sqrt{\sum_{r \neq q} p_r^2(k)}.
\end{equation}
It has been shown in \cite{jrnl:Opt2} that the waterfilling operation can be interpreted as the Euclidean projection of a vector onto a simplex. Using this framework, the robust waterfilling operator in \eqref{eq:WF} can be expressed as the Euclidean projection of the vector $\boldsymbol{\Phi}_q \triangleq [ \Phi_q(1), \dots, \Phi_q(N) ]^T$ onto the simplex $\mathcal{P}_q$ defined in \eqref{eq:S_admSet}:
\begin{equation}\label{eq:WFproj}
    \WFq(\mathbf{p}_{-q}) = \big[ - \boldsymbol{\Phi}_q \big]_{\mathcal{P}_q },
\end{equation}
which can be conveniently written as
\begin{equation}\label{eq:WFprj}
    \WFq(\mathbf{p}_{-q}) = \bigg[ -\boldsymbol{\sigma}_q - \sum_{r \neq q} \mathbf{F}_{rq} \mathbf{p}_r - \epsilon_q \mathbf{f}_q  \bigg]_{\mathcal{P}_q },
\end{equation}
where
\begin{eqnarray}\label{eq:vecSigma}
    \boldsymbol{\sigma}_q &\triangleq& \left[ \sigma^2_q(1), \dots, \sigma^2_q(N) \right]^T ,\\
    \mathbf{F}_{rq} &\triangleq& \Diag \Big( F_{rq}(1), \dots, F_{rq}(N) \Big),\\
    \mathbf{f}_q &\triangleq& \bigg[ \sqrt{\textstyle\sum_{r \neq q} p_r^2(1)} \ , \dots, \sqrt{\textstyle\sum_{r \neq q} p_r^2(N)} \bigg]^T .
\end{eqnarray}
\subsection{Contraction Property of the Waterfilling Projection}\label{appndx:contract}
Given the waterfilling mapping $\WFQ(\cdot)$ defined as
\begin{equation}\label{eq:blKWF}
    \WFQ(\mathbf{p}) = (\WFq(\mathbf{p}_{-q}))_{q \in \Omega} \ : \ \mathcal{P} \mapsto \mathcal{P},
\end{equation}
where $\mathcal{P} \triangleq \mathcal{P}_1 \times \cdots \times \mathcal{P}_Q$, with $\mathcal{P}_q$ and $\WFq(\mathbf{p}_{-q})$ respectively defined in \eqref{eq:S_admSet} and \eqref{eq:WFprj}, the block-maximum norm is defined as\cite{bk:tsitsiklis1989parallel}
\begin{equation}\label{eq:WFbknorm}
    \big|\big|\WFQ(\mathbf{p})\big|\big|_{2,\block}^{\mathbf{w}} \triangleq \max_{q \in \Omega} \frac{\big|\big|\WFq(\mathbf{p}_q)\big|\big|_2}{w_q},
\end{equation}
where $\mathbf{w} \triangleq [w_1, \dots, w_Q]^T > \mathbf{0}$ is any positive weight vector. The vector weighted maximum norm is given by \cite{bk:MatrixAnalysis}
\begin{equation}\label{eq:vecNorm}
    || \mathbf{x} ||_{\infty,\VEC}^{\mathbf{w}} \triangleq \max_{q \in \Omega} \frac{|x_q|}{w_q}, \qquad \mathbf{w}>\mathbf{0}, \quad \mathbf{x}\in\mathbb{R}^{Q}
\end{equation}
The matrix weighted maximum norm is given by \cite{bk:MatrixAnalysis}
\begin{equation}\label{eq:matrixNorm}
    ||\mathbf{A}||_{\infty, \mat}^{\mathbf{w}} \triangleq \max_q \frac{1}{w_q} \sum_{r=1}^{Q} | [\mathbf{A}]_{qr} | w_r, \qquad \mathbf{A} \in \mathbb{R}^{Q \times Q}.
\end{equation}

The mapping $\WFQ(\cdot)$ is said to be a block-contraction\footnote{The mapping $\mathbf{T}$ is called a block-contraction with modulus $\alpha \in [0,1)$ if it is a contraction in the block-maximum norm with modulus $\alpha$ \cite{bk:tsitsiklis1989parallel}.} with modulus $\alpha$ with respect to the norm $\big|\big|\cdot\big|\big|_{2,\block}^{\mathbf{w}}$ if there exists $\alpha \in [0,1)$ such that, $\forall \mathbf{p}^{(1)},\mathbf{p}^{(2)} \in \mathcal{P}$,
\begin{equation}\label{eq:bkCon}
    \big|\big|\WFQ(\mathbf{p}^{(1)})-\WFQ(\mathbf{p}^{(2)})\big|\big|_{2,\block}^{\mathbf{w}} \leq \alpha \big|\big| \mathbf{p}^{(1)} - \mathbf{p}^{(2)} \big|\big|_{2,\block}^{\mathbf{w}},
\end{equation}
where $\mathbf{p}^{(i)}=\Big( \mathbf{p}^{(i)}_q, \dots, \mathbf{p}^{(i)}_q \Big)$ for $i=1,2$.

The contraction property of the waterfilling mapping is given by the following lemma:
%
\begin{lemma}\label{lemma:WFcontract}
Given $\mathbf{w} \triangleq [w_1, \dots, w_Q]^T > \mathbf{0}$, the mapping $\WFQ(\cdot)$ defined in \eqref{eq:blKWF} satisfies
\begin{equation}\label{eq:contractWF}
\begin{split}
    \big|\big|\WFQ\big(&\mathbf{p}^{(1)}\big) - \WFQ\big(\mathbf{p}^{(2)}\big)  \big|\big|_{2,\block}^{\mathbf{w}} \leq \\ &||\mathbf{S}^{max}+\mathbf{E}||_{\infty, \mat}^{\mathbf{w}} \times \big|\big|\mathbf{p}^{(1)}-\mathbf{p}^{(2)}\big|\big|_{2,\block}^{\mathbf{w}},
\end{split}
\end{equation}
$\forall \mathbf{p}^{(1)},\mathbf{p}^{(2)} \in \mathcal{P}$, where $\mathbf{E}$ and $\mathbf{S}$ as defined in \eqref{eq:E_defn} and \eqref{eq:S_max} respectively. Furthermore, if
\begin{equation}\label{eq:lemmaCondn}
    ||\mathbf{S}^{max}+\mathbf{E} ||_{\infty, \mat}^{\mathbf{w}} < 1,
\end{equation}
for some $\mathbf{w}>\mathbf{0}$, then the mapping $\WFQ(\cdot)$ is a block contraction with modulus $\alpha = ||\mathbf{S}^{max}+\mathbf{E} ||_{\infty, \mat}^{\mathbf{w}}$.
\end{lemma}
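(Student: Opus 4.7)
\textbf{Proof proposal for Lemma~\ref{lemma:WFcontract}.} My plan is to leverage the projection representation \eqref{eq:WFprj} of $\WFq$ and exploit the non-expansiveness of the Euclidean projection onto the closed convex set $\mathcal{P}_q$. Starting from
\[
\WFq(\mathbf{p}_{-q}) = \Bigl[-\boldsymbol{\sigma}_q - \textstyle\sum_{r\neq q}\mathbf{F}_{rq}\mathbf{p}_r - \epsilon_q\,\mathbf{f}_q\Bigr]_{\mathcal{P}_q},
\]
the non-expansiveness of $[\,\cdot\,]_{\mathcal{P}_q}$ and the triangle inequality yield
\[
\bigl\|\WFq(\mathbf{p}_{-q}^{(1)})-\WFq(\mathbf{p}_{-q}^{(2)})\bigr\|_2
\leq \Bigl\|\sum_{r\neq q}\mathbf{F}_{rq}(\mathbf{p}_r^{(1)}-\mathbf{p}_r^{(2)})\Bigr\|_2 + \epsilon_q\bigl\|\mathbf{f}_q^{(1)}-\mathbf{f}_q^{(2)}\bigr\|_2.
\]
The new $\epsilon_q$-term, absent from \cite{jrnl:AIWFA}, is the main novelty I need to handle; the first term I will treat as in the nominal case.

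For the $\mathbf{f}_q$ term, I apply the reverse triangle inequality for the Euclidean norm on $\mathbb{R}^{Q-1}$ frequency-by-frequency:
\[
\bigl|\mathbf{f}_q^{(1)}(k)-\mathbf{f}_q^{(2)}(k)\bigr|
= \Bigl|\,\|(p_r^{(1)}(k))_{r\neq q}\|_2 - \|(p_r^{(2)}(k))_{r\neq q}\|_2\,\Bigr|
\leq \Bigl(\textstyle\sum_{r\neq q}(p_r^{(1)}(k)-p_r^{(2)}(k))^2\Bigr)^{1/2}.
\]
Squaring and summing over $k$, followed by the elementary bound $\sqrt{x_1^2+\cdots+x_{Q-1}^2}\leq x_1+\cdots+x_{Q-1}$ for non-negative $x_i$, gives $\|\mathbf{f}_q^{(1)}-\mathbf{f}_q^{(2)}\|_2 \leq \sum_{r\neq q}\|\mathbf{p}_r^{(1)}-\mathbf{p}_r^{(2)}\|_2$, so the uncertainty term contributes $\sum_{r\neq q}\epsilon_q\|\mathbf{p}_r^{(1)}-\mathbf{p}_r^{(2)}\|_2 = \sum_r[\mathbf{E}]_{qr}\|\mathbf{p}_r^{(1)}-\mathbf{p}_r^{(2)}\|_2$ by the definition \eqref{eq:E_defn}.

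The most delicate step, which I expect to be the main obstacle, is showing that the diagonal interference matrix $\mathbf{F}_{rq}$ can be replaced by its maximum over $\mathcal{D}_q\cap\mathcal{D}_r$ rather than over all of $\{1,\dots,N\}$. The point is that coordinates $k\notin\mathcal{D}_q^\circ{}^c$ do not contribute to $\WFq$ (they are clipped to $0$ by the projection), and symmetrically, if $p_r^{(1)}(k)=p_r^{(2)}(k)=0$ for both endpoints at some $k\notin \mathcal{D}_r^\circ{}^c$, those coordinates drop out of the difference. A careful case-analysis on the active simplex faces, in the spirit of \cite{jrnl:AIWFA,jrnl:Opt2}, lets me restrict $\mathbf{F}_{rq}$ to the subspace supported on $\mathcal{D}_q\cap\mathcal{D}_r$, yielding
\[
\Bigl\|\sum_{r\neq q}\mathbf{F}_{rq}(\mathbf{p}_r^{(1)}-\mathbf{p}_r^{(2)})\Bigr\|_2 \leq \sum_{r\neq q}[\mathbf{S}^{max}]_{qr}\,\bigl\|\mathbf{p}_r^{(1)}-\mathbf{p}_r^{(2)}\bigr\|_2.
\]

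Combining the two bounds, dividing by $w_q$, inserting $w_r/w_r$ inside the sum, bounding each $\|\mathbf{p}_r^{(1)}-\mathbf{p}_r^{(2)}\|_2/w_r$ by $\|\mathbf{p}^{(1)}-\mathbf{p}^{(2)}\|_{2,\block}^{\mathbf{w}}$, and finally taking the maximum over $q\in\Omega$ produces
\[
\bigl\|\WFQ(\mathbf{p}^{(1)})-\WFQ(\mathbf{p}^{(2)})\bigr\|_{2,\block}^{\mathbf{w}}
\leq \Bigl(\max_q \tfrac{1}{w_q}\sum_r([\mathbf{S}^{max}]_{qr}+[\mathbf{E}]_{qr})w_r\Bigr)\bigl\|\mathbf{p}^{(1)}-\mathbf{p}^{(2)}\bigr\|_{2,\block}^{\mathbf{w}},
\]
which, by definition \eqref{eq:matrixNorm}, is exactly \eqref{eq:contractWF}. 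The block-contraction claim \eqref{eq:lemmaCondn} then follows immediately by setting $\alpha=\|\mathbf{S}^{max}+\mathbf{E}\|_{\infty,\mat}^{\mathbf{w}}<1$.
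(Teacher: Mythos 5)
Your proposal is correct and follows essentially the same route as the paper's own proof: non-expansiveness of the Euclidean projection onto $\mathcal{P}_q$, the reverse triangle inequality applied frequency-by-frequency to bound $\|\mathbf{f}_q^{(1)}-\mathbf{f}_q^{(2)}\|_2$ by $\bigl(\sum_{r\neq q}\|\mathbf{p}_r^{(1)}-\mathbf{p}_r^{(2)}\|_2^2\bigr)^{1/2}\leq\sum_{r\neq q}\|\mathbf{p}_r^{(1)}-\mathbf{p}_r^{(2)}\|_2$, restriction of the interference maxima to $\mathcal{D}_q\cap\mathcal{D}_r$, and assembly via the weighted vector and matrix maximum norms. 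The only differences are cosmetic (the paper invokes Jensen's inequality where you use the elementary $\ell_2$--$\ell_1$ bound, and you are if anything more explicit about why the maxima can be restricted to $\mathcal{D}_q\cap\mathcal{D}_r$).
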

\begin{proof}
This proof is structured similar to that of \cite[Proposition 2]{jrnl:Opt2}. However the additional term $\epsilon_q \mathbf{f}_q$ in \eqref{eq:WFprj} necessitates a separate treatment here.

For each $q \in \Omega$ and $i=1,2$, given $\mathbf{f}_q^{(i)} \!=\! \left[ \sqrt{\sum_{r \neq q} p_r^2(1)^{(i)}}, \dots, \sqrt{\sum_{r \neq q} p_r^2(N)^{(i)}} \right]^T$, let $\mathbf{p}_{-q}(k)^{(i)} \triangleq [p_1(k)^{(i)}, \ldots, p_{q-1}(k)^{(i)}, p_{q+1}(k)^{(i)}, \ldots, p_Q(k)^{(i)}]$ and $\Delta\mathbf{f}_q \triangleq \big|\big| \mathbf{f}_q^{(1)}- \mathbf{f}_q^{(2)} \big|\big|_2$. Then,
\begin{eqnarray}
\label{eq:f1} \Delta\mathbf{f}_q \!\!\!\!\!& = &\!\!\!\!\! \left[ \sum_{k=1}^N \left( \sqrt{\sum_{r \neq q} p_r^2(k)^{(1)}} - \sqrt{\sum_{r \neq q} p_r^2(k)^{(2)}} \right)^2 \right]^{\frac{1}{2}}  \\
\label{eq:f2} \!\!\!\!\!&=&\!\!\!\!\!
\left[ \sum_{k=1}^N \left( \big|\big| \mathbf{p}_{-q}(k)^{(1)}\big|\big|_2 -\big|\big| \mathbf{p}_{-q}(k)^{(2)}\big|\big|_2 \right)^2 \right]^{\frac{1}{2}}\\
\label{eq:f3} \!\!\!\!\!&\leq&\!\!\!\!\! \left[ \sum_{k=1}^N \Big|\Big| \mathbf{p}_{-q}(k)^{(1)} - \mathbf{p}_{-q}(k)^{(2)} \Big|\Big|_2^2 \right]^{\frac{1}{2}} \\
\!\!\!\!\!&=&\!\!\!\!\!
\Bigg[ \sum_{k=1}^N \sum_{r \neq q} \Big( p_r^2(k)^{(1)} + p_r^2(k)^{(2)} \nonumber\\
\label{eq:f4}   && \qquad \qquad \qquad \qquad - 2 p_r(k)^{(1)} p_r(k)^{(2)} \Big) \Bigg]^{\frac{1}{2}} \\
\label{eq:f5} \!\!\!\!\!&=&\!\!\!\!\! \bigg[ \sum_{r \neq q} \Big|\Big|\mathbf{p}_r^{(1)}-\mathbf{p}_r^{(2)} \Big|\Big|_{2}^2 \bigg]^{\frac{1}{2}}.
\end{eqnarray}
where \eqref{eq:f3} follows from \cite[Lemma 5.1.2]{bk:MatrixAnalysis}. Now, define for each $q \in \Omega$,
\begin{equation}\label{eq:e_wfq1}
\begin{split}
e_{\WFq} &\triangleq \Big|\Big|\WFq\big(\mathbf{p}_{-q}^{(1)}\big)-\WFq\big(\mathbf{p}_{-q}^{(2)}\big)\Big|\Big|_{2} \\
e_q &\triangleq \Big|\Big|\mathbf{p}_q^{(1)}-\mathbf{p}_q^{(2)}\Big|\Big|_{2}.
\end{split}
\end{equation}
Then, using \eqref{eq:WFprj} in \eqref{eq:e_wfq1}, $e_{\WFq}$ can be written as
\begin{eqnarray}
e_{\WFq} \!\!\!\!\!&=&\!\!\!\!\! \bigg|\bigg| \Big[ -\boldsymbol{\sigma}_q - \sum_{r \neq q} \mathbf{F}_{rq} \mathbf{p}_r^{(1)} - \epsilon_q \mathbf{f}_q^{(1)}  \Big]_{\mathcal{P}_q }  \nonumber\\
\label{eq:e_wfq2}  && \qquad   - \Big[ -\boldsymbol{\sigma}_q - \sum_{r \neq q} \mathbf{F}_{rq} \mathbf{p}_r^{(2)} - \epsilon_q \mathbf{f}_q^{(2)}  \Big]_{\mathcal{P}_q}  \bigg|\bigg|_2 \\
\!\!\!\!\!&\leq&\!\!\!\!\! \bigg|\bigg| \sum_{r \neq q} \mathbf{F}_{rq} \mathbf{p}_r^{(1)} + \epsilon_q \mathbf{f}_q^{(1)} \nonumber\\
\label{eq:e_wfq3} &&\qquad\qquad\quad\ -\sum_{r \neq q} \mathbf{F}_{rq} \mathbf{p}_r^{(2)} - \epsilon_q \mathbf{f}_q^{(2)} \bigg|\bigg|_2\\
\label{eq:e_wfq4} \!\!\!\!\!&=&\!\!\!\!\! \bigg|\bigg| \sum_{r \neq q} \mathbf{F}_{rq} \left( \mathbf{p}_r^{(1)} - \mathbf{p}_r^{(2)} \right) + \epsilon_q \left(\mathbf{f}_q^{(1)}- \mathbf{f}_q^{(2)} \right)\!\! \bigg|\bigg|_2\\
\label{eq:e_wfq5} \!\!\!\!\!&\leq&\!\!\!\!\! \bigg|\bigg|\! \sum_{r \neq q} \mathbf{F}_{rq} \left( \mathbf{p}_r^{(1)} - \mathbf{p}_r^{(2)} \right)\!\bigg|\bigg|_2 \!\!+ \epsilon_q \Big|\Big| \mathbf{f}_q^{(1)}- \mathbf{f}_q^{(2)}\! \Big|\Big|_2\\
\!\!\!\!\!&\leq&\!\!\!\!\! \Big|\Big| \sum_{r \neq q} \mathbf{F}_{rq} \left( \mathbf{p}_r^{(1)} - \mathbf{p}_r^{(2)} \right)\Big|\Big|_2 \nonumber\\
\label{eq:e_wfq6} && \qquad\qquad\quad +\ \epsilon_q \bigg[ \sum_{r \neq q} \Big|\Big|\mathbf{p}_r^{(1)}-\mathbf{p}_r^{(2)}\Big|\Big|_{2}^2 \bigg]^{\frac{1}{2}}\\
\!\!\!\!\!&\leq&\!\!\!\!\! \sum_{r \neq q} \Big( \max_k ~ {F}_{rq}(k) \Big) \Big|\Big| \mathbf{p}_r^{(1)} - \mathbf{p}_r^{(2)} \Big|\Big|_2 \nonumber\\
\label{eq:e_wfq7} &&\qquad\qquad\quad + \  \epsilon_q \bigg[ \sum_{r \neq q} \Big|\Big|\mathbf{p}_r^{(1)}-\mathbf{p}_r^{(2)}\Big|\Big|_{2}^2 \bigg]^{\frac{1}{2}}\\
\label{eq:e_wfq8} \!\!\!\!\!&=&\!\!\!\!\! \sum_{r \neq q} \left( \max_{k \in \mathcal{D}_q \cap \mathcal{D}_r} {F}_{rq}(k) \right) e_r + \epsilon_q \Big[ \sum_{r \neq q} \ e_r^2 \Big]^{\frac{1}{2}}\\
\label{eq:e_wfq9} \!\!\!\!\!&\leq&\!\!\!\!\! \sum_{r \neq q} \Big(  [\mathbf{S}^{max}]_{rq} + \epsilon_q \Big) e_r
\end{eqnarray}
$\forall \ \mathbf{p}_q^{(1)}, \mathbf{p}_q^{(2)} \in \mathcal{P}_q$ and $\forall \ q \in \Omega$, where:
\eqref{eq:e_wfq3} follows from the nonexpansive property of the waterfilling projection \cite[Lemma 3]{jrnl:Opt2}; \eqref{eq:e_wfq5} follows from the triangle inequality \cite{bk:MatrixAnalysis}; \eqref{eq:e_wfq6} follows from \eqref{eq:f5}; \eqref{eq:e_wfq7} and \eqref{eq:e_wfq8} follow from the definitions of $\mathbf{F}_{rq}$ and $e_q$ respectively from \eqref{eq:vecSigma} and \eqref{eq:e_wfq1}; and \eqref{eq:e_wfq9} follows from the definition of $\mathbf{S}^{max}$ in \eqref{eq:S_max} and Jensen's inequality \cite{bk:cvxBoyd}.

The set of inequalities in \eqref{eq:e_wfq9} can be written in vector form as
\begin{equation}\label{eq:e_WF1}
    \mathbf{0} \leq \mathbf{e}_{\WFQ} \leq (\mathbf{S}^{max}+\mathbf{E}) \mathbf{e}
\end{equation}
where $\mathbf{E}$ is defined in \eqref{eq:E_defn} and the vectors $\mathbf{e}_{\WFQ}$ and $\mathbf{e}$ are defined as    $\mathbf{e}_{\WFQ} \triangleq \big[ e_{\WFQ_1}, \dots, e_{\WFQ_Q} \big]^T$, and $\mathbf{e} = [e_1. \dots, e_Q]^T$. Using the vector and matrix weighted maximum norms from \eqref{eq:vecNorm} and \eqref{eq:matrixNorm} respectively, \eqref{eq:e_WF1} can be written as
\begin{equation}\label{eq:e_WF2}
\begin{split}
    \big|\big| \mathbf{e}_{\WFQ} \big|\big|_{\infty,\VEC}^{\mathbf{w}} &\leq \big|\big| (\mathbf{S}^{max}+\mathbf{E}) \mathbf{e} \big|\big|_{\infty,\VEC}^{\mathbf{w}} \\
    &\leq \big|\big| \mathbf{S}^{max}+\mathbf{E} \big|\big|_{\infty,\mat}^{\mathbf{w}}  \cdot \big|\big|\mathbf{e} \big|\big|_{\infty,\VEC}^{\mathbf{w}},
\end{split}
\end{equation}
$\forall \ \mathbf{w}>\mathbf{0}$. Using the block-maximum norm \eqref{eq:WFbknorm}, we get
\begin{equation}\label{eq:e_WF3}
\begin{split}
    \big|\big|\WFQ\big(\mathbf{p}^{(1)}\big) &- \WFQ\big(\mathbf{p}^{(2)}\big)\big|\big|_{2,\block}^{\mathbf{w}} = \big|\big|\mathbf{e}_{\WFQ} \big|\big|_{\infty,\VEC}^{\mathbf{w}} \\
    \leq& \big|\big| \mathbf{S}^{max}+\mathbf{E} \big|\big|_{\infty,\mat}^{\mathbf{w}} \Big|\Big|\mathbf{p}_r^{(1)}-\mathbf{p}_r^{(2)}\Big|\Big|_{2,\block}^{\mathbf{w}}
\end{split}
\end{equation}
$\forall \mathbf{p}^{(2)},\mathbf{p}^{(2)} \in \mathcal{P}$, with $\mathbf{E}$ and $\mathbf{S}$ as defined in \eqref{eq:E_defn} and \eqref{eq:S_max} respectively. It is clear that $\WFQ(\cdot)$ is a block contraction when $||\mathbf{S}^{max}+\mathbf{E}||_{\infty,\mat}^{\mathbf{w}} < 1$.
\end{proof}
\subsection{Proof of Theorem~\ref{thrm:B_existNE}}\label{appendix:B_existNE}
From \cite{Rosen1965}, every concave game\footnote{A game is said to be concave if the payoff functions are concave and the sets of admissible strategies are compact and convex.} has at least one equilibrium. For the game $\mathscr{G}^{\rob}$:
\begin{enumerate}
  \item The set of feasible strategy profiles, $\mathcal{P}_q$ of each player $q$ is compact and convex.
  \item The payoff function of each player $q$ in \eqref{eq:WCp2} is continuous in $\mathbf{p} \in \mathcal{P}$ and concave in $\mathbf{p}_q \in \mathcal{P}_q$.
\end{enumerate}
Thus, the game $\mathscr{G}^{\rob}$ has at least one robust equilibrium.

From Lemma~\ref{lemma:WFcontract}, the waterfilling mapping $\WFQ(\cdot)$ is a block-contraction if \eqref{eq:lemmaCondn} is satisfied for some $\mathbf{w>0}$. Thus, the RE of game $\mathscr{G}^{\rob}$ is unique (using \cite[Theorem 1]{jrnl:JSAC}). Since $\mathbf{S}^{max}+\mathbf{E}$ is a nonnegative matrix, there exists a positive vector $\bar{\mathbf{w}}$ such that
\begin{equation}\label{eq:NEproof}
    ||\mathbf{S}^{max}+\mathbf{E}||_{\infty,\mat}^{\bar{\mathbf{w}}} < 1
\end{equation}
Using \cite[Corollary 6.1]{bk:tsitsiklis1989parallel} and the triangle inequality \cite{bk:MatrixAnalysis}, this is satisfied when
\begin{equation}\label{eq:NE_proof2}
    ||\mathbf{S}^{max}||_{\infty,\mat}^{\bar{\mathbf{w}}}+||\mathbf{E}||_{\infty,\mat}^{\bar{\mathbf{w}}} < 1 \; \Rightarrow \; \rho(\mathbf{S}^{max}) < 1- \rho(\mathbf{E}).
\end{equation}

From Lemma~\ref{lemma:WFcontract} and \eqref{eq:NEproof} the waterfilling mapping $\WFQ(\cdot)$ is a block-contraction. From \cite[Theorem 2]{jrnl:JSAC}, the robust iterative waterfilling algorithm described in Algorithm~\ref{al:RIWFA} converges to the unique RE of game $\mathscr{G}^{\rob}$ for any set of feasible initial conditions and any update schedule.
%
%
\subsection{Proof of Theorem~\ref{thrm:2F}}\label{appendix:2F}
Consider the interior operating points of the robust waterfilling operator $\WFq(\cdot)$ where it is linear. Eliminating $\mu$ from \eqref{eq:2pow_allocs1}, we get
\begin{equation}\label{eq:2p}
   p = \frac{1 - \alpha - \epsilon}{2(1 - (m+1)\alpha/2 - \epsilon)} \geq 0.5.
\end{equation}

The signal-to-interference-plus-noise ratio (SINR) for the two users in the two frequency bins is given by
\begin{equation}\label{eq:2sinr}
\begin{split}
   \SINr(1) &= \SINR(2) = \frac{p}{\sigma^2 + \alpha(1-p)} \\
   \SINr(2) &= \SINR(1) = \frac{1-p}{\sigma^2 + m \alpha p} \ .
\end{split}
\end{equation}
and the sum-rate of the system at the RE is
\begin{equation}\label{eq:2sum_rate}
    S_{\rob} \triangleq 2 \log \Big(1+ \frac{p}{\sigma^2 + \alpha(1-p)} \Big) + 2 \log \Big(1+ \frac{1-p}{\sigma^2 + m \alpha p} \Big)
\end{equation}

The gradient of $p$ with respect to $\epsilon$ is
\begin{equation}\label{eq:2dp_de}
    \frac{\der p}{\der \epsilon} = \frac{(m-1)\alpha}{4(1 - (m+1)\alpha/2 - \epsilon)^2} >0.
\end{equation}
Thus, the RE moves towards the FDMA solution as the uncertainty bound increases.
\subsubsection*{Case 1: High interference scenario}
In the high interference scenario, $\sigma^2 \ll \alpha(1-p)$. Let $\xi={p}/\alpha(1-p)$. Then, the SINR for the two users in the two frequency bins can be approximated as
\begin{equation}\label{eq:2sinrHI}
\begin{split}
   \SINr(1) &= \SINR(2) \approx \frac{p}{\alpha(1-p)} = \xi \ ,  \\
   \SINr(2) &= \SINR(1) \approx \frac{1-p}{m \alpha p} = \frac{1}{m \alpha^2 \xi}.
\end{split}
\end{equation}
The sum-rate of the system at high interference can be approximated as
\begin{equation}\label{eq:2sum_rateHI}
\begin{split}
    S &\approx 2 \log \big( (1+\xi)(1+\frac{1}{m \alpha^2 \xi}) \big)\\
      &= 2 \log \big( 1+\frac{1}{m \alpha^2 } +\xi+\frac{1}{m \alpha^2 \xi} \big).
\end{split}
\end{equation}

Our aim is to analyse the behaviour of the sum-rate $S$ as the uncertainty $\epsilon$ increases. To this end, we show that the gradient of the sum-rate with respect to $\epsilon$ is positive. As $\log(x)$ increases monotonically with $x$, we consider
\begin{equation}\label{eq:2anlsysHI1}
    \frac{\der}{\der \epsilon} (\xi+\frac{1}{m \alpha^2 \xi})= (1 - \frac{1}{m \alpha^2 \xi^2})\frac{\der \xi}{\der \epsilon} 
    = \Big(1- \frac{(1-p)^2}{m p^2}\Big)\frac{\der \xi}{\der \epsilon},
\end{equation}
and $(1- \frac{(1-p)^2}{m p^2})>0$ since $p\geq0.5$ and $m>1$. Now,
\begin{equation}\label{eq:2anlsysHI2}
    \frac{\der \xi}{\der \epsilon} = \frac{1}{\alpha (1-p)^2} \frac{\der p}{\der \epsilon} .
\end{equation}
From \eqref{eq:2dp_de}, \eqref{eq:2anlsysHI1} and \eqref{eq:2anlsysHI2}, we get $\frac{\der S}{\der \epsilon} > 0$. Thus, we see that the sum-rate of the system increases as the uncertainty $\epsilon$ increases. This also shows that the robust-optimization equilibrium achieves a higher sum-rate in the presence of channel uncertainty $(\epsilon>0)$ than the Nash equilibrium at zero uncertainty $(\epsilon=0)$.

The social optimal solution for this system at high interference is frequency division multiplexing \cite{jrnl:ZQLuo}. In other words, the frequency space is fully partitioned at the social optimal solution. The sum-rate at the social optimal solution for the given system at high interference, $S^*$, is given by
\begin{equation}\label{eq:2SOsum_rateHI}
    S^* = 2 \log \Big( 1+\frac{1}{\sigma^2} \Big).
\end{equation}
The price of anarchy at high interference, $\PoA$, is
\begin{equation}\label{eq:2PoAHI}
   \PoA = \frac{ \log \big( 1+\frac{1}{\sigma^2} \big)}{\log \big( 1+\frac{1}{m \alpha^2 } +\xi+\frac{1}{m \alpha^2 \xi} \big)}.
\end{equation}
Since $\frac{\der S}{\der \epsilon} >0$, we have $\frac{\der \PoA}{\der \epsilon~~~} < 0$.
\subsubsection*{Case 2: Low interference scenario}
In the low interference scenario, i.e. when $m \alpha p \ll \sigma^2$, the signal-to-interference+noise ratio SINR for the two users in the two frequency bins can be approximately written as
\begin{equation}\label{eq:2sinrHI}
\begin{split}
   \SINr(1) &= \SINR(2) \approx \frac{p}{\sigma^2}, \\
   \SINr(2) &= \SINR(1) \approx \frac{1-p}{\sigma^2}.
\end{split}
\end{equation}
The sum-rate of the system at low interference can be approximated as
\begin{equation}\label{eq:2sum_rateLI}
\begin{split}
    S &\approx 2 \log \big( (1+\frac{p}{\sigma^2})(1+\frac{1-p}{\sigma^2}) \big)\\
      &= 2 \log \big( 1+\frac{1}{\sigma^2}+\frac{p - p^2}{\sigma^2} \big).
\end{split}
\end{equation}
Now,
\begin{equation}\label{eq:2anlsysLI_1}
    \frac{\der S}{\der \epsilon} = \Big( 1+\frac{1}{\sigma^2}+\frac{p - p^2}{\sigma^2} \Big) ^{-1} \ \frac{(1-2p)}{\sigma^2} \frac{\der p}{\der \epsilon} <0.
\end{equation}

At low interference, the system behaves similar to a parallel Gaussian channel system. The social optimal solution in this scenario is the classical waterfilling solution and leads to equal power allocation, i.e., $p_1(1)=p_1(2)=p_2(1)=p_2(2)=p=0.5$. The sum-rate at the social optimal solution for the given system at low interference, $S^*$, is given by
\begin{equation}\label{eq:2SOsum_rateHI}
    S^* = 4 \log \Big( 1+\frac{1}{2\sigma^2} \Big).
\end{equation}
The price of anarchy at low interference, $\PoA$, is
\begin{equation}\label{eq:2PoAHI}
\begin{split}
   \PoA &= \frac{ 4 \log \big( 1+\frac{1}{2 \sigma^2} \big)}{2\log \big( 1+\frac{1}{\sigma^2}+\frac{p - p^2}{\sigma^2} \big)} \\
   &= \frac{ \log \big( 1+\frac{1}{\sigma^2}+\frac{1}{4\sigma^4} \big)}{\log \big( 1+\frac{1}{\sigma^2}+\frac{p - p^2}{\sigma^2} \big)}.
\end{split}\end{equation}

Note that, at low interference, $m \alpha p \ll 1$. From \eqref{eq:2p}, we get $p \approx 0.5$. Thus the $\PoA$ is close to unity. Since $\frac{\der S}{\der \epsilon} <0$, we have $\frac{\der \PoA }{\der \epsilon~~~} > 0 $.
\qed
\subsection{Proof of Proposition~\ref{prop:crit}}\label{crit_proof}
The gradient of the sum-rate $S_{\rob}$ with respect to $\epsilon$ is
\begin{equation}\label{eq:dSde}
    \frac{\der S_{rob}}{\der \epsilon~~~{}} = \frac{\der S_{\rob}}{\der p} ~\frac{\der p}{\der \epsilon}.
\end{equation}
From \eqref{eq:2dp_de}, we have $\frac{\der p}{\der \epsilon} > 0$. Now,
\begin{equation}\label{eq:2dsdp_c}
\begin{split}
    \frac{\der S_{\rob}}{\der p~~~{}} = 2~ &\frac{ \frac{1}{\sigma^2+\alpha(1-p)} + \frac{\alpha p }{ \left(\sigma^2+\alpha(1-p)\right)^2}}{1+ \frac{p}{\sigma^2+\alpha(1-p)}} \\
    &\qquad\qquad - 2~ \frac{ \frac{1}{ \sigma^2+m\alpha p} + \frac{ (1-p)m\alpha}{\left( \sigma^2+m\alpha p \right)^2}}{1+ \frac{1-p}{ \sigma^2+m\alpha p}}.
\end{split}
\end{equation}
Setting $\frac{\der S_{\rob}}{\der p~~~{}} = 0$, we solve for $\alpha$ to get the following roots,
\begin{equation}\label{eq:alpha_roots}
    \alpha =
    \begin{cases}
      0,\\
      \frac{-\sigma^2}{2m}\left( m + 1 \pm \Big( 4m/\sigma^2 + (m+1)^2 \Big)^{\frac{1}{2}} \right),\\
      \frac{ \sigma^2 (2p - 1)}{ (m-1)p^2 + 2p -1 }.
    \end{cases}
\end{equation}
The positive root that is independent of $\epsilon$ and $p$ (which is a function of the uncertainty $\epsilon$, from \eqref{eq:2p}) is the required solution where the sum-rate is constant regardless of uncertainty. Thus, the required interference value is given by
\begin{equation}\label{eq:aplha_crit_INproof}
    \alpha_{\crit} = \frac{\sigma^2}{2m}\left( \left( 4m/\sigma^2 + (m+1)^2 \right)^{\frac{1}{2}} - m - 1 \right).
\end{equation}
Since the root $\alpha_{\crit}$ of $\frac{\der S_{\rob}}{\der p~~~{}}$ is independent of $p$, different power allocation schemes (resulting in different values of $p$) will result in the same sum-rate at $\alpha=\alpha_{\crit}$. Thus, the price of anarchy at $\alpha=\alpha_{\crit}$ is unity.
\qed
\subsection{Proof of Lemma~\ref{lemma:degPart}}\label{appendix:degPart}
From \eqref{eq:WF}, the power allocations for the two users at the robust-optimization equilibrium in the $k$-th frequency are
\begin{equation}\label{eq:pow_alloc}
\begin{split}
    p_1(k)&= \Big(\mu_1 - \sigma^2 - (F_{21}+\epsilon)p_2(k)\Big)^+, \\
    p_2(k)&= \Big(\mu_2 - \sigma^2 - (F_{12}+\epsilon)p_1(k)\Big)^+,
\end{split}
\end{equation}
with $\sum_{k=1}^{N} p_1(k)=\sum_{k=1}^{N} p_2(k)=P_T$.

Let $\mathcal{D}_1, \mathcal{D}_2$ and $\mathcal{D}_{ol}$ be the sets of frequencies exclusively used by user 1, user 2 and by both respectively and $n_1 \triangleq |\mathcal{D}_1|$ and $n_2 \triangleq |\mathcal{D}_2|$ be the number of frequencies exclusively used by user 1 and user 2 respectively at the equilibrium. Then, from \eqref{eq:pow_alloc}, we have $p_1(k) = \mu_1-\sigma^2_1$ and $p_2(k)=0 \ \forall \ k \in \mathcal{D}_1$ and $p_1(k)=0$ and $p_2(k) = \mu_2-\sigma^2_1 \ \forall \ k \in \mathcal{D}_2$. The power remaining for allocation to the frequencies in $\mathcal{D}_{ol} \triangleq \{k_1, \ldots, k_{ol}\}$ by user 1 and user 2 is $(1-n_1\mu_1)$ and $(1-n_2\mu_2)$ respectively.

This separation of the frequency-space into exclusive-use and overlapped-use frequencies allows us to analyse the system without the nonlinear operation $(\cdot)^+$. Thus, we can write the power allocations at the fixed point in the overlapped-use frequency-space as a system of linear equations,
\begin{eqnarray}
\label{eq:lin_sys1}
    p_1(k) + \big( F_{21}(k) + \epsilon \big)p_2(k) - \mu_1 - \sigma^2 \!&=&\! 0, \ \ k \in \mathcal{D}_{ol} \ \quad \\
\label{eq:lin_sys2}
    \big( F_{12}(k) + \epsilon \big)p_1(k) + p_2(k) - \mu_2 - \sigma^2 \!&=&\! 0, \ \ k \in \mathcal{D}_{ol} \\
\label{eq:lin_sys3}
    \textstyle\sum_{k \in \mathcal{D}_{ol}} p_1(k)+n_1(\mu_1 - \sigma^2) &=&\! P_T, \\
    \textstyle\sum_{k \in \mathcal{D}_{ol}} p_2(k)+n_2(\mu_2 - \sigma^2) &=&\! P_T.
\end{eqnarray}
Writing these in matrix form we get,
\begin{equation}\label{eq:matrixEqn}
    \begin{bmatrix}
        \mathbf{A}_{k_1} &    {}    & \mathbf{0} & -\mathbf{I}_2\\
            {}         & \ddots   & {} & \vdots \\
        \mathbf{0}     &    {}    & \mathbf{A}_{k_{ol}} & -\mathbf{I}_2\\
        \mathbf{I}_2     &    \cdots    & \mathbf{I}_2 & \mathbf{D} \end{bmatrix}
        \begin{bmatrix} \mathbf{p}(k_1) \\ \vdots  \\ \mathbf{p}(k_{ol}) \\ \boldsymbol{\mu} \end{bmatrix} =
        \begin{bmatrix} \mathbf{0}_2 \\ \vdots  \\ \mathbf{0}_2 \\ \mathbf{p_t} \end{bmatrix}
\end{equation}
where
\begin{equation}\label{eq:defn_matrixEqn1}
\begin{aligned}
    \mathbf{A}_k &\triangleq \begin{bmatrix} 1 & F_{21}(k)+\epsilon\\ F_{12}(k)+\epsilon & 1 \end{bmatrix}, \
     \mathbf{D} \triangleq \begin{bmatrix} n_1 & 0 \\ 0 & n_2 \end{bmatrix}, \\
     \mathbf{p}(k) &\triangleq \begin{bmatrix} p_1(k) \\ p_2(k) \end{bmatrix}, \quad
     \mathbf{p_t} \triangleq \begin{bmatrix} P_T \\ P_T \end{bmatrix} \quad
     \boldsymbol{\mu} \triangleq \begin{bmatrix} \mu_1 - \sigma^2 \\ \mu_2 - \sigma^2 \end{bmatrix}.
\end{aligned}
\end{equation}
Let
\begin{equation}\label{eq:defn_matrixEqn2}
\begin{aligned}
    \mathbf{A} &\triangleq \begin{bmatrix} \mathbf{A}_{k_1} & {} & \mathbf{0}\\ {} & \ddots &{}\\ \mathbf{0} & {}& \mathbf{A}_{k_{ol}} \end{bmatrix}, \
    \mathbf{B} \triangleq \begin{bmatrix} -\mathbf{I}_2 \\ \vdots  \\ -\mathbf{I}_2 \end{bmatrix}, \\
    \mathbf{C} &\triangleq \big[\mathbf{I}_2 \ \dots \ \mathbf{I}_2\big] \ \text{and} \
    \mathbf{P} \triangleq \big[ \mathbf{p}(k_1)  \dots   \mathbf{p}(k_{ol}) \big]^T.
\end{aligned}
\end{equation}
so that we can write \eqref{eq:matrixEqn} as
\begin{equation}\label{b1}
    \begin{bmatrix} \mathbf{A} &  \mathbf{B}\\ \mathbf{C} & \mathbf{D} \end{bmatrix}
        \begin{bmatrix} \mathbf{P} \\ \boldsymbol{\mu} \end{bmatrix} =
        \begin{bmatrix} \mathbf{0}\\ \mathbf{p_t} \end{bmatrix}.
\end{equation}
We can solve this system to get
\begin{eqnarray}
        \begin{bmatrix} \mathbf{P} \\ \boldsymbol{\mu} \end{bmatrix} =
        \begin{bmatrix} \mathbf{A} &  \mathbf{B}\\ \mathbf{C} & \mathbf{D} \end{bmatrix}^{-1}
        \begin{bmatrix} \mathbf{0}\\ \mathbf{p_t} \end{bmatrix} 
\label{eq:P_solv2}
        =\begin{bmatrix} \mathbf{W} &  \mathbf{X}\\ \mathbf{Y} & \mathbf{Z} \end{bmatrix}
        \begin{bmatrix} \mathbf{0}\\   \mathbf{p_t} \end{bmatrix} = \begin{bmatrix} \mathbf{Xp_t}\\ \mathbf{Zp_t} \end{bmatrix},
\end{eqnarray}
where
\begin{equation}\label{eq:ABCD_inv}
    \begin{bmatrix}
        \mathbf{W} &  \mathbf{X}\\
        \mathbf{Y} & \mathbf{Z} \end{bmatrix} \triangleq
        \begin{bmatrix}
        \mathbf{A} &  \mathbf{B}\\
        \mathbf{C} & \mathbf{D} \end{bmatrix}^{-1}.
\end{equation}
Using \cite[Fact 10.12.9]{bk:bernstein2009matrix} and differentiating \eqref{eq:P_solv2} with respect to $\epsilon$, we get
\begin{eqnarray}
\label{eq:dPdE1}
       \D\begin{bmatrix}
       \mathbf{ P} \\
       \boldsymbol{\mu} \end{bmatrix} \!\!\!\!&=& \!\!
       -\begin{bmatrix} \mathbf{A} &  \mathbf{B}\\ \mathbf{C} & \mathbf{D} \end{bmatrix}^{-1} \!
       \begin{bmatrix} \D \mathbf{A} &  \mathbf{0}\\ \mathbf{0} & \mathbf{0} \end{bmatrix}
       \begin{bmatrix} \mathbf{A} &  \mathbf{B}\\ \mathbf{C} & \mathbf{D} \end{bmatrix}^{-1} \!
       \begin{bmatrix} \mathbf{0}\\ \mathbf{p_t} \end{bmatrix}  \  \quad \\
\label{eq:dPdE3}
       &=&\!\! -\begin{bmatrix} \mathbf{W}(\D\mathbf{A}) \mathbf{X} \mathbf{p_t}\\
       \mathbf{Y}(\D\mathbf{A}) \mathbf{X} \mathbf{p_t} \end{bmatrix} .
\end{eqnarray}
Due to the nature of the waterfilling function, $n_1$ and $n_2$ are non-decreasing piecewise-constant functions of $\epsilon$. The above derivative exists only in the regions where $n_1$ and $n_2$ are constant. From \cite[Proposition 2.8.7]{bk:bernstein2009matrix} and using $\mathbf{A}^{-1} = \Diag ( \mathbf{A}_{k_1}^{-1}, \dots, \mathbf{A}_{k_{ol}}^{-1})$,
we get
\begin{eqnarray}
\label{eq:W2}
    \mathbf{W} \!\!\!\!&=&\!\!\!\! \begin{bmatrix} \mathbf{A}_{k_1}^{-1}-\mathbf{A}_{k_1}^{-1}\mathbf{Z}\mathbf{A}_{k_1}^{-1} & \cdots & -\mathbf{A}_{k_1}^{-1}\mathbf{Z}\mathbf{A}_{k_{ol}}^{-1} \\ \vdots & {} & \vdots\\ -\mathbf{A}_{k_{ol}}^{-1}\mathbf{Z}  \mathbf{A}_{k_1}^{-1}  & \cdots& \mathbf{A}_{k_{ol}}^{-1}-\mathbf{A}_{k_{ol}}^{-1}\mathbf{Z}\mathbf{A}_{k_{ol}}^{-1} \end{bmatrix}, ~ \quad\\
\label{eq:X2}
    \mathbf{X} \!\!\!\!&=&\!\!\!\! \begin{bmatrix} \mathbf{A}_{k_1}^{-1}\mathbf{Z} & \dots  & \mathbf{A}_{k_{ol}}^{-1}\mathbf{Z} \end{bmatrix}^T, \\
\label{eq:Y2}
    \mathbf{Y} \!\!\!\!&=&\!\!\!\! -\begin{bmatrix} \mathbf{Z}\mathbf{A}_{k_1}^{-1} \ \cdots  \ \mathbf{Z}\mathbf{A}_{k_{ol}}^{-1} \end{bmatrix}, \\
\label{eq:Z2}
    \mathbf{Z} \!\!\!\!&=&\!\!\!\! \frac{1}{\widehat{\Delta}}\begin{bmatrix} n_2+\displaystyle\sum_{k \in \mathcal{D}_{ol}}\frac{1}{\Delta_i} & \displaystyle\sum_{k \in \mathcal{D}_{ol}} \frac{F_{21}(i)+\epsilon}{\Delta_i}\\ \displaystyle\sum_{k \in \mathcal{D}_{ol}}\frac{F_{12}(i)+\epsilon}{{\Delta_i}} & n_1+\displaystyle\sum_{k \in \mathcal{D}_{ol}}\frac{1}{\Delta_i} \end{bmatrix},
\end{eqnarray}
where $\Delta_i \triangleq \det(\mathbf{A}_i) = 1 - \big(F_{21}(i)+\epsilon\big)\big(F_{12}(i)+\epsilon\big)$,
\begin{equation}\label{eq:Delta_hat}
\begin{split}
    \widehat{\Delta} \triangleq& \left(n_1 +  \textstyle\sum_{k \in \mathcal{D}_{ol}}\frac{1}{\Delta_i}\right) \left(n_2+\textstyle\sum_{k \in \mathcal{D}_{ol}}\frac{1}{\Delta_i}\right)\\
    &\qquad - \left(\textstyle\sum_{k \in \mathcal{D}_{ol}} \frac{F_{21}(i)+\epsilon}{\Delta_i}\right) \left(\textstyle\sum_{k \in \mathcal{D}_{ol}} \frac{F_{12}(i)+\epsilon}{\Delta_i} \right),
\end{split}
\end{equation}
and
\begin{equation}\label{eq:A_i_inv}
    \mathbf{A}_{i}^{-1} = \begin{bmatrix} \frac{1}{\Delta_i} & -\frac{F_{21}(i)+\epsilon}{\Delta_i}\\ -\frac{F_{12}(i)+\epsilon}{\Delta_i} & \frac{1}{\Delta_i} \end{bmatrix}.
\end{equation}
Thus, from \eqref{eq:P_solv2} and \eqref{eq:dPdE3}, we get
\begin{equation}\label{eq:P_solved}
    \mathbf{P} = \mathbf{Xp_t} =
    \begin{bmatrix} \mathbf{A}_{k_1}^{-1}\mathbf{Zp_t} & \dots  & \mathbf{A}_{k_{ol}}^{-1}\mathbf{Zp_t} \end{bmatrix}^T,
\end{equation}
and
\begin{eqnarray}
  \frac{\der \mathbf{P}}{\der \epsilon} \!\!\!\!&=&\!\!\! -\mathbf{W}(\frac{\der \mathbf{A}}{\der \epsilon}) \mathbf{X},   \\
\label{eq:dPde_solved}
    &=&\!\!\!\! \begin{bmatrix} \! \displaystyle\sum_{i=k_1}^{k_{ol}}\! \mathbf{A}_{k_1}^{-1}\mathbf{Z}\mathbf{A}_{i}^{-1}\mathbf{GA}_{i}^{-1 }\mathbf{Z}\mathbf{p_t} - \! \mathbf{A}_{k_1}^{-1}\mathbf{GA}_{k_1}^{-1}\mathbf{Z}\mathbf{p_t} \! \\
    \vdots  \\
    \! \displaystyle\sum_{i=k_1}^{k_{ol}} \! \mathbf{A}_{k_{ol}}^{-1}\mathbf{Z}\mathbf{A}_{i}^{-1}\mathbf{GA}_{i}^{-1}\mathbf{Z} \mathbf{p_t} -\! \mathbf{A}_{k_{ol}}^{-1}\mathbf{GA}_{k_{ol}}^{-1}\mathbf{Z}\mathbf{p_t} \!\end{bmatrix}~~~~~
\end{eqnarray}
where
\begin{equation}\label{eq:G}
    \mathbf{G}=\frac{\der \mathbf{A}_i}{\der \epsilon}=\begin{bmatrix} 0 & 1\\ 1 & 0 \end{bmatrix} \ \forall \ i=1,\ldots,N.
\end{equation}
Therefore, for $k=k_1,\dots,k_{ol}$,
\begin{eqnarray}
\label{eq:p(k)_solved}
    \mathbf{p}(k) \!\!\!&=&\!\!\! \begin{bmatrix} p_1(k) & p_2(k) \end{bmatrix}^T = \mathbf{A}_k^{-1}\mathbf{Zp_t}, \\
   \mathbf{p}'(k)\!\!\!&=&\!\!\!  \D \mathbf{p}(k) =\begin{bmatrix} p'_1(k) & p'_2(k) \end{bmatrix}^T \\
  &=&\!\!\! \mathbf{A}_{k}^{-1}\mathbf{Z} \! \displaystyle\sum_{i=k_1}^{k_{ol}}\! \mathbf{A}_{i}^{-1}\mathbf{GA}_{i}^{-1} \mathbf{Z} \mathbf{p_t} \nonumber\\
\label{eq:dp(k)_solved} &&\qquad\qquad\qquad   -\mathbf{A}_{k}^{-1}\mathbf{GA}_{k}^{-1}\mathbf{Z}\mathbf{p_t}
\end{eqnarray}

Consider the extent of partitioning $J(k)$ at any frequency $k$. For frequencies where $J(k)=0$, at least one of the users has zero power allocation and that will not change with change in uncertainty. Thus $\D J(k) = 0$ for $k \in \mathcal{D}_1 \cup \mathcal{D}_2$. Also, in cases when $n_1$ or $n_2$ change due to some frequency $\bar{k}$ dropping from the set $\mathcal{D}_{ol}$, $J(\bar{k})$ increases from some negative value to zero.

Now consider the extent of partitioning for frequencies where both users have non-zero power allocation. Differentiating the extent of partitioning for frequency $k \in \mathcal{D}_{ol}$ with respect to $\epsilon$, we get
\begin{eqnarray}
  \D J(k) &=& \D \left(-p_1(k)p_2(k)\right)\\
   &=& -p'_1(k)p_2(k)+p_1(k)p'_2(k)  \\
  &=& -\mathbf{p}(k)^T \mathbf{G} \mathbf{p}'(k)\\
  &=& -\Big(\mathbf{A}_k^{-1}\mathbf{Zp_t} \Big)^T \mathbf{G} \Big( -\mathbf{A}_{k}^{-1} \mathbf{GA}_{k}^{-1} \mathbf{Z} \mathbf{p_t} \nonumber\\
  && \qquad\  + \mathbf{A}_{k}^{-1}\mathbf{Z} \sum_{i=k_1}^{k_{ol}} \mathbf{A}_{i}^{-1}\mathbf{GA}_{i}^{-1}\mathbf{Z} \mathbf{p_t} \Big) \\
  &=& \mathbf{p_t}^T \mathbf{Z}^T {\mathbf{A}_k^{-1}}^T \mathbf{G} \mathbf{A}_{k}^{-1} \Big( \mathbf{I} \nonumber\\
  && - \mathbf{Z} \! \sum_{i=k_1}^{k_{ol}} \! \mathbf{A}_{i}^{-1}\mathbf{GA}_{i}^{-1}\mathbf{G}^{-1} \mathbf{A}_{k} \Big) \mathbf{GA}_{k}^{-1} \mathbf{Z} \mathbf{p_t} \nonumber\\[-1em]
\end{eqnarray}
Let $\mathbf{q}_k \triangleq \mathbf{GA}_{k}^{-1} \mathbf{Z} \mathbf{p_t}$. Using $\mathbf{G}^T=\mathbf{G}^{-1} = \mathbf{G}$, $\mathbf{GA}_{i}^{-1}\mathbf{G}= \mathbf{A}_i^{-T}$ and $ \mathbf{GA}_{k}\mathbf{G} = \mathbf{A}_k^{T}$ we get
\begin{equation}
  \D J(k) = \mathbf{q}_k^T \Big( \mathbf{A}_{k}^{-1} - \mathbf{A}_{k}^{-1} \mathbf{Z} \sum_{i=k_1}^{k_{ol}} \mathbf{A}_{i}^{-1}\mathbf{A}_{i}^{-T}\mathbf{A}_{k}^T  \Big) \mathbf{q}_k
\end{equation}
Let $\mathbf{M}_k = \sum_{i=k_1}^{k_{ol}}\mathbf{A}_{i}^{-1}\mathbf{A}_{i}^{-T}\mathbf{A}_{k}^T$ and $\mathbf{Q}_k =  \mathbf{A}_{k}^{-1} - \mathbf{A}_{k}^{-1} \mathbf{Z} \mathbf{M}_k $. When $n_{ol} = o(N)$ (i.e., when $\lim_{N \rightarrow \infty} \frac{n_{ol}}{N} = 0$), we have the total number of frequencies, $n_1+n_2 = O(N)$. Since $\mathbf{A}_i^{-1} \mathbf{A}_i^{-T} \mathbf{A}^T_k = O(1)$ for each $i$ and $k$, we have $\mathbf{M}_k = O(n_{ol})$ and $\mathbf{Z} = O(1/N)$. Thus,
\begin{equation}\label{eqn}
\begin{aligned}
   \lim_{N \rightarrow \infty} \mathbf{A}_{k}^{-1} \mathbf{ZM}_k &= \mathbf{0} \\
   \Rightarrow \ \lim_{N \rightarrow \infty} \mathbf{Q}_k + \mathbf{Q}_k^T &= \mathbf{A}_{k}^{-1}+\mathbf{A}_{k}^{-T} \succ \mathbf{0}
\end{aligned}
\end{equation}
from Theorem~\ref{thrm:B_existNE}. Thus we get $\mathbf{x}^T\mathbf{Q}_k\mathbf{x} > 0 \ \forall \mathbf{x} \in \mathbb{R}^{2 \times 1}$ as its symmetric part $\mathbf{Q}_k + \mathbf{Q}_k^T$ is positive definite \cite{johnson1970positive}. Hence, we get $\D J(k) \geq 0$ when $N \rightarrow \infty$, with equality when $J(k)=0$.
\qed
\section*{Acknowledgements}
We thank Dr. Ishai Menache, Microsoft Research for his input on robust game theory and Dr. Gesualdo Scutari, University of Illinois at Urbana-Champaign for initial guidance and advice on waterfilling algorithms. We also thank Peter von Wrycza, KTH Royal Institute of Technology and Dr. M.R Bhavani Shankar, University of Luxembourg for pointing out a typographical error in an early version of the proofs, along with Prof. Bj\"{o}rn Ottersten, KTH Royal Institute of Technology for valuable discussions. We also thank the anonymous reviewers for their valuable feedback.
%
%
\bibliographystyle{IEEEtr}
\bibliography{IEEEabrv,TVT_refs}
\end{document}